%% file: main.tex
\pdfoutput=1
\documentclass[aps,prx,twocolumn,superscriptaddress,nofootinbib,longbibliography]{revtex4-2}
\input{glyphtounicode}
\usepackage{amsmath,amssymb}
\usepackage{booktabs}
\usepackage{array}
\usepackage[table]{xcolor}
\usepackage{graphicx}
\usepackage{tikz}
\usetikzlibrary{arrows.meta,positioning,shapes.symbols,shapes.geometric,calc,backgrounds,fit}
\usepackage{amsthm}
\usepackage[hidelinks]{hyperref}
\usepackage[normalem]{ulem}

\renewcommand{\topfraction}{0.92}
\renewcommand{\bottomfraction}{0.8}
\renewcommand{\textfraction}{0.06}
\renewcommand{\floatpagefraction}{0.7}

\theoremstyle{definition}
\newtheorem{definition}{Definition}
\theoremstyle{plain}
\newtheorem{theorem}{Theorem}

\newtheorem{corollary}{Corollary}
\theoremstyle{remark}

\theoremstyle{plain}
\newtheorem*{thmformal}{Theorem~\ref{thm:main} (formal version)}
\newcommand{\ZA}{\langle Z_A\rangle}
\newcommand{\supp}{\mathrm{supp}}
\newcommand{\MMD}{\mathrm{MMD}}

\begin{document}

\title{``Train classical, deploy quantum'' requires rethinking generalization}

\author{Snehal Raj}
\affiliation{QC Ware Corp., Palo Alto, USA and Paris, France}
\affiliation{LIP6, CNRS, Sorbonne Universit\'{e}, Paris, France}
\author{Natansh Mathur}
\affiliation{QC Ware Corp., Palo Alto, USA and Paris, France}
\author{Alejandro Perdomo-Ortiz}
\email{alejandro.perdomoortiz@qcware.com}
\affiliation{QC Ware Corp., Palo Alto, USA and Paris, France}

\begin{abstract}
Generative models have become central across science and industry, from image and text
    synthesis to the design of molecules and materials. Quantum generative models are considered one
    of the most promising applications for quantum computers, since a quantum circuit naturally
    produces samples from the distribution it encodes, and for suitable circuits that distribution is
    believed to be hard for any classical computer to reproduce. A leading strategy trains these
    models on a classical computer and reserves the quantum device for generating samples at
    deployment. This is possible when the training loss can be evaluated on a classical computer. A
    prime example is the maximum mean discrepancy (MMD$^2$), a moment-matching loss that compares the
    model and the data through their Pauli-$Z$ correlations. Research so far has asked whether such
    models can be trained and whether their sampling is hard; whether minimizing such an objective
    yields a model that \emph{generalizes}, rather than one that merely reproduces the training
    statistics, remains poorly understood. We benchmark a broad set of quantum and classical
    generative models by direct sampling and show that models trained with a moment-matching loss
    generally show worse generalization than the likelihood-trained models. We show this on two
    application-inspired datasets: first a cardinality-constrained dataset at up to $30$ qubits and
    second a dataset of genomic single-nucleotide variants, whose valid set is the observed data.
    These results indicate that a converged moment-matching loss is not a reliable measure of generalization, and that train-classical, deploy-quantum workflows will need approaches that target generalization directly, leaving open whether better training objectives suffice or whether the model architectures themselves must change.
\end{abstract}

\maketitle

\renewcommand{\topfraction}{0.92}
\renewcommand{\bottomfraction}{0.7}
\renewcommand{\textfraction}{0.06}
\renewcommand{\floatpagefraction}{0.72}
\renewcommand{\dbltopfraction}{0.92}
\renewcommand{\dblfloatpagefraction}{0.72}
\setcounter{topnumber}{3}
\setcounter{bottomnumber}{2}
\setcounter{totalnumber}{5}
\setcounter{dbltopnumber}{3}

\section{Introduction}
\label{sec:intro}

\begin{figure*}[t]
\centering
\begin{tikzpicture}[>={Latex[length=2.4mm]}, font=\small]
  % ----- generative models (left): TN + NN on top, quantum circuit below -----
  \node[draw=gray!50, rounded corners=3pt, fill=gray!4, minimum width=4.6cm,
        minimum height=3.9cm] at (1.25,0.0) {};
  \node[font=\bfseries\footnotesize, anchor=north] at (1.25,-2.15) {Generative model $q_\theta$};
  % tensor network (top left)
  \node[font=\footnotesize, anchor=south] at (0.05,1.55) {tensor network};
  \foreach \i in {0,...,3}{
    \node[circle, draw=green!45!black, fill=green!25, inner sep=2.1pt]
      (tn\i) at ({-0.62+\i*0.45},1.15) {};
    \draw ({-0.62+\i*0.45},1.01) -- ({-0.62+\i*0.45},0.75);
  }
  \draw (tn0) -- (tn1) -- (tn2) -- (tn3);
  % neural network (top right)
  \node[font=\footnotesize, anchor=south] at (2.5,1.55) {neural network};
  \foreach \i/\y in {0/0.75, 1/1.05, 2/1.35}{
    \node[circle, draw=blue!50, fill=blue!15, inner sep=1.8pt] (na\i) at (2.05,\y) {};
  }
  \foreach \i/\y in {0/0.9, 1/1.2}{
    \node[circle, draw=blue!50, fill=blue!15, inner sep=1.8pt] (nb\i) at (2.95,\y) {};
  }
  \foreach \a in {0,1,2}{ \foreach \b in {0,1}{ \draw[blue!40] (na\a) -- (nb\b); } }
  % quantum circuit (bottom)
  \node[font=\footnotesize, anchor=south] at (1.25,-0.5) {quantum circuit};
  \foreach \y in {-0.85,-1.25,-1.65}{
    \node[anchor=east, font=\footnotesize] at (0.0,\y) {$|0\rangle$};
    \draw (0.05,\y) -- (0.5,\y);
  }
  \node[draw, rounded corners=1.5pt, fill=green!12, minimum width=1.1cm,
        minimum height=1.35cm] at (1.05,-1.25) {$U(\theta)$};
  \foreach \y in {-0.85,-1.25,-1.65}{
    \draw (1.6,\y) -- (2.05,\y);
    \node[draw, rounded corners=1pt, minimum size=0.4cm, fill=white] at (2.28,\y) {};
    \draw[semithick] ($(2.28,\y)+(-0.11,-0.05)$) arc (200:-20:0.11);
    \draw[semithick,->] ($(2.28,\y)+(-0.01,-0.05)$) -- ++(0.085,0.1);
  }
  \draw[->, thick] (3.62,0.0) -- (4.95,0);
  \node[anchor=south, font=\footnotesize] at (4.28,0.1) {$x\sim q_\theta$};
  % ----- distribution space with low-loss sets (right) -----
  \begin{scope}[on background layer]
    \node[ellipse, draw=green!55!black, fill=green!20, fill opacity=0.55,
          minimum width=3.1cm, minimum height=3.3cm] (G) at (11.0,0) {};
    \node[ellipse, draw=blue!60, fill=blue!14, fill opacity=0.5,
          minimum width=3.3cm, minimum height=2.5cm] (L1) at (10.5,1.4) {};
    \node[ellipse, draw=orange!80!black, fill=orange!18, fill opacity=0.5,
          minimum width=3.3cm, minimum height=2.5cm] (L2) at (10.2,-1.5) {};
    \node[ellipse, draw=red!65!black, fill=red!12, fill opacity=0.5,
          minimum width=3.7cm, minimum height=2.8cm] (L3) at (8.3,0) {};
  \end{scope}
  \node[green!35!black, font=\footnotesize, align=center, anchor=west] at (12.7,0.15)
    {generalizes\\{\scriptsize low forward KL}};
  \node[blue!60!black, font=\footnotesize] at (9.85,3.0) {$\mathcal{L}_1$: likelihood};
  \node[orange!70!black, font=\footnotesize] at (9.55,-3.05) {$\mathcal{L}_2$: correlator};
  \node[red!65!black, font=\footnotesize] at (7.5,1.85) {$\mathcal{L}_3$: MMD$^2$};
  \node[star, star points=5, star point ratio=2.3, fill=green!55!black, inner sep=1.3pt] (ps) at (11.6,0.15) {};
  \node[anchor=west, font=\footnotesize] at (11.75,0.15) {$p^*$};
  \node[circle, fill=blue!70, inner sep=1.9pt] (m1) at (11.0,1.3) {};
  \node[circle, fill=orange!85!black, inner sep=1.9pt] (m2) at (10.7,-1.35) {};
  \node[font=\bfseries\large, red!70!black] (m3) at (7.5,0.05) {$\times$};
  \draw[->, densely dotted, thick, blue!65] (5.3,0.25) to[out=30,in=182] (m1);
  \draw[->, densely dotted, thick, orange!80!black] (5.3,-0.25) to[out=-30,in=178] (m2);
  \draw[->, densely dotted, thick, red!70!black] (5.3,0.0) to[out=0,in=205] (m3);
  \node[anchor=north, font=\footnotesize] at (4.45,-0.12) {minimize $\mathcal{L}_k$};
\end{tikzpicture}
\caption{\textbf{Training losses versus deployment generalization.} An optimizer fits a
generative model $q_\theta$ (a tensor network, a neural network, or a quantum circuit)
using quantities that can be evaluated without sampling the deployed model. The benchmark
tests whether small empirical values of the negative log-likelihood ($\mathcal{L}_1$), the fixed-order correlator
($\mathcal{L}_2$), and the full-kernel $\mathrm{MMD}^2$ ($\mathcal{L}_3$) objectives correspond to better
generalization (low forward KL; green).}
\label{fig:overview}
\end{figure*}

Generative models are used across science and industry, from image synthesis~\cite{goodfellow2014gan,ho2020ddpm}
and text generation~\cite{brown2020gpt3} to the design of new molecules~\cite{gomezbombarelli2018}.
Generative models are usually evaluated on whether the samples that they produce are novel, valid, and faithful to the target
distribution, rather than reproduce their training examples~\cite{carlini2023extracting,brown2019guacamol}. Quantum circuits provide a natural way to build such models. A quantum circuit Born machine
(QCBM) prepares a parametrized quantum state and generates samples by measuring it in the
computational basis~\cite{benedetti2019qcbm,liu2018mmdqcbm,coyle2020born}. Quantum correlations
can increase the expressive power of generative models~\cite{cheng2018information,gao2022enhancing}.
For certain circuit families, sampling from the output distribution is believed to be classically
intractable under standard complexity assumptions~\cite{bremner2011iqp,oszmaniec2022fermion}.
This possible sampling advantage has made QCBMs a candidate application for near-term quantum
computers~\cite{benedetti2019qcbm,coyle2020born}. QCBMs have been explored on tasks ranging from
finance to image synthesis, and parametrized quantum circuits are applied across machine
learning more broadly~\cite{cherrat2023hedging,mathur2025bayesian,raj2026scalable,raj2025quic}. For a review of quantum generative models beyond QCBMs and their applications, see~\cite{tian2022qglm}.

Training a QCBM directly on quantum hardware is however expensive, even with improved
training schemes~\cite{coyle2025density,coyle2026adaptive}. Each gradient estimate requires many
circuit evaluations, and barren plateaus can make typical gradients exponentially small in the
number of qubits~\cite{mcclean2018bp}.  One route around this cost pretrains the circuit classically as a matrix-product-state approximation and then fine-tunes it on hardware~\cite{rudolph2023synergistic}; the pretraining is classically efficient, though surpassing classical models still requires the added cost of hardware training. These costs have motivated a \emph{train classical, deploy
quantum} (TCDQ) strategy. The loss is evaluated and optimized on a classical computer, and the quantum
device is used only after training to draw samples~\cite{recio2025iqpopt,bako2025fbm}.  This separation is possible because evaluating the quantities used by the loss can be classically
tractable even when sampling from the full output distribution is believed to be classically hard.
Indeed, many models that provably avoid barren plateaus also allow their losses to be evaluated
efficiently on a classical computer~\cite{cerezo2025nogo}. Work on this strategy has mainly asked
whether the classical loss can be optimized~\cite{lerch2026iqpinit,rudolph2023synergistic} and
whether the deployed model can be reproduced by a classical surrogate~\cite{herrero2025born}.
Existing work has paid much less attention to ensuring that the trained circuit is a useful generator. Sampling may be
classically hard even if the model fails to produce unseen, valid data. We therefore ask a third
question: once the loss has converged, does the model generalize?

Gili et al.~\cite{gili2022} established a way to measure generalization in this setting. They distinguish an \emph{efficient learner}, which only reproduces the training distribution, from an \emph{efficient generator}, which produces new and valid samples. The two are separated by sampling the trained model, through four metrics: exploration, fidelity, rate, and coverage. Scoring a generative model by the novel, valid samples it produces is also long-standing practice in molecular design~\cite{brown2019guacamol}. In this work we take up the generalization question through both an empirical study and a theoretical analysis. Empirically, we benchmark thirteen classical and quantum generative models at up to $30$ qubits, on two application-inspired datasets, a cardinality-constrained family and a set of genomic single-nucleotide variants. We score each model by free sampling, and we compare its training loss against its coverage and its forward Kullback-Leibler divergence to the target distribution. Across the benchmark, the moment-matching loss correlates poorly with both measures of generalization, whereas coverage and the forward KL agree much better with each other. Also, we find that moment-trained models generally generalize worse than likelihood-trained ones (Fig.~\ref{fig:overview}). Theoretically, we prove that a loss fixed by a prescribed set of low-order correlators cannot certify generalization on its own. Two distributions can drive such a loss to zero and still cover very different fractions of the valid set.

The paper is organized as follows. Section~\ref{sec:background} reviews quantum circuit Born machines and the TCDQ paradigm. Section~\ref{sec:methods} defines the deployed training losses as moment losses,
generalization through the forward KL, and its sample-based metrics, coverage and fidelity.
Section~\ref{sec:theory} proves that a moment loss of bounded capacity admits exact global
minimizers of exponentially small coverage, and locates the failure of the full MMD in its
empirical minimizer, the memorizer. Section~\ref{sec:results} reports the benchmark, ranking the models by loss, coverage, and forward KL on the cardinality-constrained family (to $30$ qubits by state-vector simulation) and the genomic variants, and testing robustness across circuit gate count, kernel bandwidth, and a classical baseline. 

\section{Background and related work}
\label{sec:background}

\paragraph{Generative modeling.} A generative model learns a distribution from a finite set of
samples and produces new data from it. Generative adversarial networks train a generator against
a discriminator~\cite{goodfellow2014gan}; autoregressive and diffusion models maximize the
likelihood of the data; the trained models generate text~\cite{brown2020gpt3},
images~\cite{ho2020ddpm}, and molecules~\cite{gomezbombarelli2018}. How well such a model
generalizes is judged on data beyond its training set. Arora et al.\ showed that the adversarial
objective can be driven to its optimum by a generator supported on roughly as many points as the
discriminator has parameters, so a small training objective does not imply that the learned
distribution is close to the target~\cite{arora2017}, and support-size measurements on trained
GANs confirmed the effect~\cite{arorazhang2018}. Evaluation therefore moved to the samples.
Precision and recall separate the quality of generated data from the fraction of the target the
model covers~\cite{sajjadi2018,naeem2020}, and memorization tests check that generated data are
new~\cite{carlini2023extracting}. Our study asks the corresponding questions for quantum
generative models, in the spirit of careful benchmarking of quantum learning
models~\cite{bowles2024benchmarking,schuld2022advantage}, building primarily from the work
of Gili et al.~\cite{gili2022}, and formalizing our generalization derivations and results
in the context of Arora et al~\cite{arora2017}.

\paragraph{Quantum circuit Born machines.} A Born machine encodes a probability
distribution as the measurement distribution of a quantum state, $p_\theta(x) =
|\langle x | \psi_\theta\rangle|^2$. Trained variationally, they are a standard model for
quantum generative learning~\cite{benedetti2019qcbm,liu2018mmdqcbm,coyle2020born}, and have been
trained on trapped-ion hardware on tasks up to high-resolution
imagery~\cite{zhu2019training,rudolph2022digits,leytonortega2021robust}. For such models, sampling
is native to quantum hardware, and for circuit families such as
instantaneous quantum polynomial-time (IQP) circuits the output distribution is classically
hard to sample under standard complexity assumptions~\cite{bremner2011iqp}. Whether such
distributions are efficiently learnable at all is an important research direction: Clifford outputs
are learnable, yet a single non-Clifford gate can make distribution learning
hard~\cite{sweke2021learnability,hinsche2021learnability,hinsche2023tgate}.

\paragraph{Train classical, deploy quantum.} Training Born machines on hardware is challenging,
owing to barren plateaus~\cite{mcclean2018bp} and the cost of estimating gradients
from samples. A way around this is to keep training entirely classical and use the quantum
device only for inference. Recio-Armengol et al.\ train IQP Born machines on the moment-matching
loss (MMD$^2$) at up to a thousand qubits by writing the loss as a classical mixture of Pauli-$Z$ word
expectations~\cite{recio2025iqpopt,rudolph2024trainability}, each computable in polynomial
time via den Nest's estimator~\cite{dennest2010iqp}. Bak\'o et al.\ propose fermionic Born
machines (FBMs): magic (non-Gaussian) input states evolved by matchgate (free-fermion) circuits,
trained by matching constant-locality Pauli-$Z$ string correlators~\cite{bako2025fbm}, building on
the Fermion Sampling advantage scheme~\cite{oszmaniec2022fermion}. Rudolph et al.\ pretrain a parametrized circuit from a
classically optimized tensor network~\cite{rudolph2023synergistic}; related classical-training
protocols appear in Refs.~\cite{kasture2023protocols,kyriienko2024protocols}. The training loss uses only
classically cheap observables, while sampling from the trained model can remain
classically hard for suitable circuit families.

\paragraph{Trainability and simulability.} Most analysis of this paradigm studies whether the
moment-matching loss is optimizable or whether the model is classically surrogatable. Lerch et al.\
characterize when the MMD$^2$ loss of an IQP Born machine has barren plateaus and when
data-dependent initialization restores trainability~\cite{lerch2026iqpinit}. Herrero-Gonz\'alez
et al.\ identify the Born-rule probability as a Fourier series of Pauli-$Z$ correlators, give
conditions under which a correlator-described model can be classically surrogated, quantify the
discrepancy between classically trained and quantumly deployed parameters, and note that an MMD$^2$
loss whose kernel omits the relevant correlators resolves only the selected correlator
subset~\cite{herrero2025born}; the expressivity of such quantum Fourier models is itself
constrained~\cite{mhiri2025expressivity}. T\"uys\"uz et al.\ show that a fixed MMD$^2$ kernel weights the
Walsh--Hadamard spectrum toward low Hamming weight, so an MMD$^2$-trained model matches only
low-order correlations and departs from correlated, multimodal targets past a low-weight cutoff,
and a multi-kernel MMD$^2$ does not remove this; they instead train a Fourier-encoded circuit by a
classically tractable marginal likelihood that scales to a thousand qubits~\cite{tuysuz2026dqgm}.

\paragraph{Evaluating generalization.} Scores computed against the \emph{empirical training}
distribution (its KL, total variation, or moment-matching loss to the training data) measure
agreement with the training set and cannot by themselves separate memorization from
generalization. This is distinct from the forward KL divergence $\mathrm{KL}(p^*\,\|\,
q_\theta)$ to the \emph{known target}, which we use later as the gold standard for generalization
precisely because the target distribution, $p^*$, is known by design. Gili et al.\ introduce a
sample-based framework (EFRC) on cardinality-constrained datasets, where validity is cheap to
verify, and define the metrics we use, including the coverage of the unseen valid
set~\cite{gili2022,hibat2024framework,gili2023generalize}. We adopt coverage and the forward KL as our
generalization measures throughout. Two of the classical models we evaluate are the tensor-network
Born machine (TNBM) of Han et al.~\cite{han2018mps}, which trains by likelihood, and the
generative moment-matching network (GMMN)~\cite{li2015gmmn,dziugaite2015}, trained on the same
moment-matching loss (MMD$^2$); the GMMN lets us separate the effect of the loss from that of the
quantum platform. We measure coverage on
both the cardinality-constrained datasets and a real-world dataset of genomic single-nucleotide
variants, where the valid set is the set of observed sequences. For the genomic data we also
report standard population-genetics diagnostics, per-locus allele-frequency correlation and
pairwise linkage disequilibrium, alongside coverage~\cite{walonoski2018synthea}.

\section{Framework}
\label{sec:framework}
\label{sec:methods}

\paragraph{Generative modeling.} A generative task is specified by a distribution $p^*$ on
$\{0,1\}^N$. Its support $S=\supp(p^*)$ is the set of \emph{valid} strings. The learner
observes a finite training set $T=(X_1,\dots,X_n)$ drawn from $p^*$ and its empirical
distribution
\begin{equation}
\label{eq:emp}
\tilde p_T=\frac{1}{n}\sum_{i=1}^{n}\delta_{X_i}.
\end{equation}
It returns a model distribution $q_\theta$. In a train-classical, deploy-quantum (TCDQ)
workflow, quantities used during optimization are evaluated classically, whereas the trained
model is sampled from the quantum circuit at deployment. The samples it produces beyond $T$
determine its usefulness.

\paragraph{Quantum circuit models.} A quantum circuit Born machine (QCBM) prepares
$U(\theta)|0\rangle$ and generates
\begin{equation}
\label{eq:born}
q_\theta(x)=|\langle x|U(\theta)|0\rangle|^2 .
\end{equation}
Sampling is native to the hardware, but evaluating $q_\theta(x)$ requires estimating an
exponentially small projector overlap, so the model's probabilities are not efficiently
accessible~\cite{rudolph2024trainability}. For the IQP and fermionic circuit families
considered here, selected Pauli-$Z$ correlators can be evaluated classically even when
sampling from the full output distribution is believed to be classically
difficult~\cite{recio2025iqpopt,bako2025fbm}.

\paragraph{Losses.} When we train a quantum generative model, we optimize a loss
$\mathcal{L}$ that measures disagreement between the model and a reference distribution.
Three versions of any such loss must be distinguished:
\begin{equation}
\label{eq:objects}
\begin{aligned}
\mathcal{L}_*(q)&:=\mathcal{L}(p^*,q), &&\text{population discrepancy,}\\
\mathcal{L}_T(q)&:=\mathcal{L}(\tilde p_T,q), &&\text{empirical training loss,}\\
\widehat{\mathcal{L}}_{T,B}(q)&\phantom{:=\mathcal{L}(p,q),} &&\text{finite-budget estimator of }\mathcal{L}_T(q).
\end{aligned}
\end{equation}
Here $B$ counts the queries the estimator spends per evaluation, correlator draws and
model samples together. A small $\widehat{\mathcal{L}}_{T,B}$ can mean that the estimator is
accurate and that the empirical loss has been optimized. It does not by itself imply that
$q_\theta$ is close to $p^*$. For the circuit families the paradigm uses, the Pauli-$Z$ correlators are
classically computable~\cite{dennest2010iqp,recio2025iqpopt}, while log-probabilities are not
available to the precision a likelihood needs~\cite{nivandennest2013}.

The deployed losses share one structural property, which Section~\ref{sec:theory} uses in its
general form:

\begin{definition}[Moment loss of capacity $d$]
\label{def:momentloss}
A loss is a \emph{moment loss of capacity $d$} if it depends on the model only through the
expectations $\langle O_1\rangle_q,\dots,\langle O_d\rangle_q$ of bounded observables
$O_j:\{0,1\}^N\to[-1,1]$, and attains its minimum exactly when all $d$ match the reference.
\end{definition}

\noindent The widely used TCDQ losses are all moment losses, and their observables are
the Pauli-$Z$ strings $Z_A$. For $A\subseteq[N]$ write
$\chi_A(x)=(-1)^{\sum_{i\in A}x_i}$ for the parity of $A$, and
$\ZA_q=\sum_x q(x)\chi_A(x)$ for the correlator of the model. Every deployed loss is a
weighted square difference of correlators,
\begin{equation}
\label{eq:genloss}
\mathcal{L}_{w,\mathcal{F}}(p,q)\;=\;\sum_{A\in\mathcal{F}} w(A)\,
\big(\ZA_{p}-\ZA_{q}\big)^2,
\end{equation}
with the reference $p$ set to $p^*$, $\tilde p_T$ or a batch, per Eq.~\eqref{eq:objects}. The
index set $\mathcal{F}$ specifies which correlators the loss compares, the weight $w$
sets the penalty on each mismatch, and the capacity of Definition~\ref{def:momentloss} is
$d=\dim\mathrm{span}\,\mathcal{F}$. The three deployed losses are three choices of
$(w,\mathcal{F})$, with $D_L:=\sum_{\ell\le L}\tbinom{N}{\ell}=O(N^L)$ counting the
correlators of order at most $L$:
\begin{equation}
\label{eq:threelosses}
\begin{aligned}
\text{FBM correlator~\cite{bako2025fbm}:}&\quad
\mathcal{L}_{1,\{|A|\le L\}}, && d=D_L,\\
\text{truncated MMD:}&\quad \mathcal{L}_{w_\sigma,\{|A|\le L\}}, && d=D_L,\\
\text{full MMD~\cite{recio2025iqpopt}:}&\quad \mathcal{L}_{w_\sigma,2^{[N]}}, && d=2^N.
\end{aligned}
\end{equation}
The first is evaluated exactly through Pfaffians~\cite{bako2025fbm}. The third is never
evaluated exactly: the sampled-spectrum estimator of Ref.~\cite{recio2025iqpopt} estimates it
without bias by redrawing $A\sim w_\sigma$ at every update, an instance of the estimator
$\widehat{\mathcal{L}}_{T,B}$ of Eq.~\eqref{eq:objects}. These losses are
\emph{training surrogates}: they stand in for the inaccessible likelihood during
optimization, and they are the objectives under which the paradigm
trains~\cite{liu2018mmdqcbm,recio2025iqpopt,bako2025fbm}.

The weight $w_\sigma$ comes from a kernel. The MMD compares two distributions through their
mean embeddings $\mu_q=\mathbb{E}_{q}[\phi(x)]$, and the loss is $\|\mu_{p}-\mu_q\|^2$,
written with the kernel alone as
\begin{equation}
\label{eq:mmddef}
\begin{aligned}
\MMD^2_k(p,q)={}&\mathbb{E}_{x,x'\sim p}\,k(x,x')+\mathbb{E}_{y,y'\sim q}\,k(y,y')\\
&-2\,\mathbb{E}_{x\sim p,\,y\sim q}\,k(x,y).
\end{aligned}
\end{equation}
On $N$-bit strings the kernel is the Gaussian--Hamming kernel
$k_\sigma(x,y)=e^{-d_H(x,y)/2\sigma^2}$, with $d_H$ the Hamming distance and $\sigma$ the
bandwidth.\footnote{We follow the convention of Ref.~\cite{recio2025iqpopt}, in which $\sigma$
is a standard deviation.} In the Fourier basis of the cube~\cite{odonnell2014},
\begin{equation}
\label{eq:mmdcorr}
\begin{aligned}
\MMD^2_\sigma(p,q)&=\mathcal{L}_{w_\sigma,\,2^{[N]}}(p,q),\\
w_\sigma(A)&=p_\sigma^{\,|A|}(1-p_\sigma)^{N-|A|},\\
p_\sigma&=\tfrac12\big(1-e^{-1/2\sigma^2}\big),
\end{aligned}
\end{equation}
so the full MMD compares every correlator, with weights that peak at order $p_\sigma N$, which
is $0.197N$ at $\sigma=1$, and fall exponentially above
it~\cite{recio2025iqpopt,tuysuz2026dqgm,rudolph2024trainability}. Since
$w_\sigma(A)>0$ for every $A$, the kernel is characteristic: the population MMD vanishes
only at $q=p^*$~\cite{fukumizu2008,sriperumbudur2010}.

\begin{figure*}[t]
\centering
\includegraphics[width=0.7\textwidth]{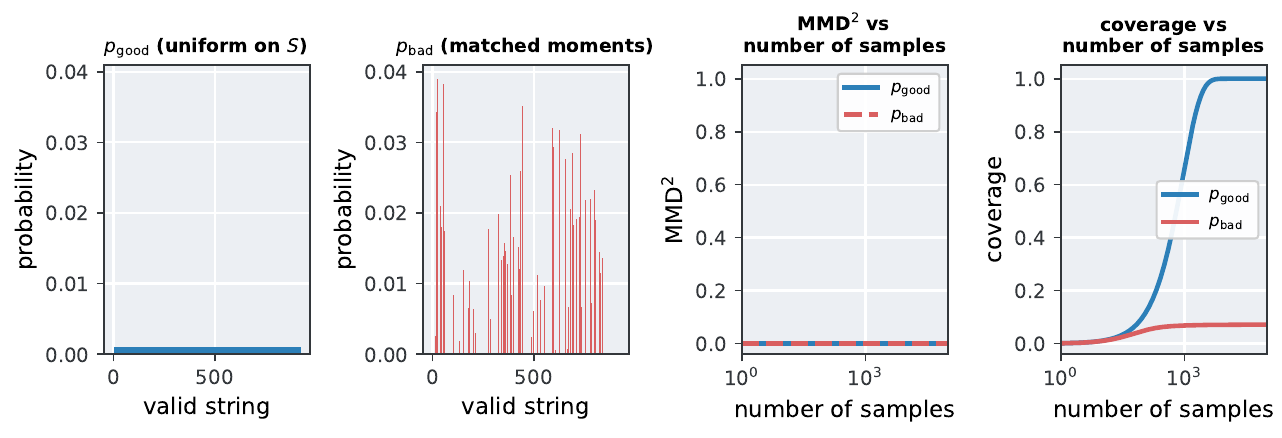}
\caption{\textbf{A truncated moment loss admits exact minimizers with different support coverage.}
The distributions $p_{\rm good}$ and $p_{\rm bad}$ match every correlator through
order $L$ (Corollary~\ref{cor:covbound}), yet their support coverages are $1$ and $0.07$.
Statements in Section~\ref{sec:theory}, proofs in Appendix~\ref{app:theory}.}
\label{fig:demo}
\end{figure*}

\begin{figure}[t]
\centering
\includegraphics[width=\columnwidth]{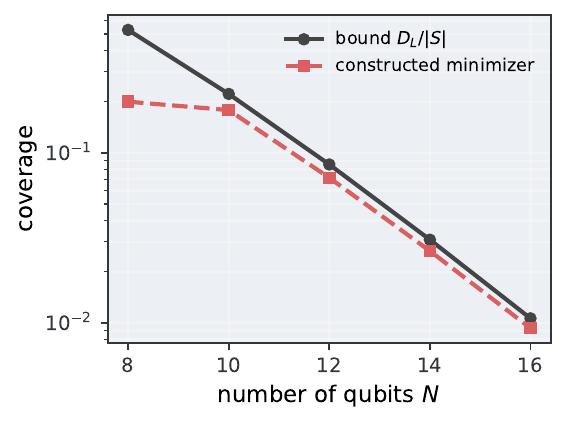}
\caption{\textbf{The low-coverage minimizer of Corollary~\ref{cor:covbound} is explicit.} For $N=8$ to
$16$ at $L=2$, a linear program constructs a distribution matching every order-$\le L$ correlator
of the uniform target. Realized coverage of the constructed minimizer (red) and the bound
$D_L/|S|$ of Corollary~\ref{cor:covbound} (gray) versus $N$, on a logarithmic scale; the truncated loss
is numerically zero at every point.}
\label{fig:lp}
\end{figure}

Each deployed loss is the square of an integral probability
metric~\cite{muller1997,gretton2012mmd,sriperumbudur2009} with discriminator class
$\mathcal{F}$ in the sense of Arora et al.~\cite{arora2017} and capacity $d$.

\paragraph{Generalization and its sample-based evaluation.} Generalization is the population
goal of a generative model: closeness to the target on the whole valid set, including the
strings the training data never showed it~\cite{arora2017,gili2022}. We state it with the
forward KL divergence.

\begin{definition}[Generalization]
\label{def:gen}
A model $q_\theta$ \emph{generalizes at level $\varepsilon$} if
\begin{equation}
\label{eq:fkl}
\mathrm{KL}(p^*\,\|\,q_\theta)=\sum_{x\in S} p^*(x)\,\log\frac{p^*(x)}{q_\theta(x)}
\;\le\;\varepsilon .
\end{equation}
\end{definition}

\noindent The forward KL is finite only if the model assigns probability to every valid
string, and small only if it weights them as the target does. That the forward KL blows up under support mismatch is the property
Arjovsky and Bottou use to motivate weaker distances for GAN
training~\cite{arjovsky2017towards}; here it is what makes coverage the natural
sample-based metric. In deployment the divergence cannot be evaluated: $q_\theta(x)$ is not efficiently
accessible, and $p^*$ is unknown outside designed benchmarks; Section~\ref{sec:results} computes the
divergence offline for its benchmark targets. What is available at deployment is a quantity
computed from the samples the model produces, under a stated budget.

\begin{definition}[Sample-based generalization metric]
\label{def:metric}
Let $G(q_\theta;T)$ be a quantity determined by the trained model and the training
set. A
statistic $\widehat G_Q$, computed from the first $Q$ terms of a sequence
$Y_1,Y_2,\dots$ of independent samples of $q_\theta$, is a \emph{sample-based
generalization metric} for $G$ if $\widehat G_Q\to G(q_\theta;T)$ as $Q\to\infty$, with
probability one.
\end{definition}

\noindent Write $U=S\setminus T$ for the unseen valid strings, assumed nonempty. We use two such
metrics: the coverage, for diversity over $U$, and the fidelity $F$, the fraction of
generated samples that are valid; the fidelity converges to the valid mass
$q_\theta(S)$. The \emph{support coverage}
\begin{equation}
\label{eq:cinf}
C_\infty(q_\theta)\;=\;\frac{|\supp(q_\theta)\cap U|}{|U|}
\end{equation}
is the fraction of the unseen valid strings with positive probability under the model.
The \emph{measured coverage} at budget $Q$ is the statistic
$\widehat C_Q(q_\theta)=|\{Y_1,\dots,Y_Q\}\cap U|/|U|$, with expectation
\begin{equation}
\label{eq:cbar}
\mathbb{E}\,\widehat C_Q(q_\theta)\;=\;\frac{1}{|U|}\sum_{x\in U}
\big[\,1-(1-q_\theta(x))^{Q}\,\big].
\end{equation}
Every sampled string lies in the support, so $\widehat C_Q\le C_\infty$ at every $Q$,
and $\widehat C_Q\to C_\infty$ with probability one. The shortfall
$C_\infty-\mathbb{E}\,\widehat C_Q=|U|^{-1}\sum_{x\in\supp(q_\theta)\cap U}
(1-q_\theta(x))^{Q}$ is small once $Q$ is large compared with the reciprocal of the
smallest positive model probability on $U$. Whether measured coverage ranks
models as the forward KL does is an empirical question, and Section~\ref{sec:results} reports the observed
rank correlations. The classical and the quantum literature track the generalization of
generative models with coverage in one form or
another~\cite{naeem2020,sajjadi2018,gili2022}. Coverage alone suffices
when the architecture confines the support to $S$, and we report coverage and fidelity
together otherwise.

\section{Moment losses do not certify generalization}
\label{sec:theory}

In this section we show that exact minimizers of the deployed losses can have
exponentially small coverage. Our results are statements about the population loss
$\mathcal{L}_*$ of Eq.~\eqref{eq:objects}, the best case for the loss. The two finite
objects are weaker: for a characteristic kernel the exact minimizer of the empirical loss
$\mathcal{L}_T$ is the memorizer $\tilde p_T$, and a small estimate
$\widehat{\mathcal{L}}_{T,B}$ implies only that the estimator is accurate. The population
values are estimable from polynomially many samples~\cite{arora2017}. We show results on
the cardinality task of Section~\ref{sec:results}: $S=\{x:|x|=N/2\}$ with
$|S|=\binom{N}{N/2}=\Theta(2^N/\sqrt N)$, $p^*$ uniform on $S$, and $T=\varnothing$.

\begin{theorem}[Moment losses of bounded capacity admit exact sparse minimizers]
\label{thm:main}
Let $p^*$ have full support on a finite valid set $S$, and let $\mathcal{L}$ be a
moment loss of capacity $d$ with $d+1<|S|$. Then there is an exact global minimizer
$q_{\mathrm{sparse}}$ of $\mathcal{L}_*$, supported inside $S$ on at most $d+1$ strings,
with
\begin{equation}
\label{eq:sparse}
\begin{gathered}
\mathcal{L}_*(q_{\mathrm{sparse}})=\mathcal{L}_*(p^*),\qquad
\mathrm{KL}(p^*\,\|\,q_{\mathrm{sparse}})=\infty,\\
\widehat C_Q(q_{\mathrm{sparse}})\;\le\;C_\infty(q_{\mathrm{sparse}})\;\le\;
\frac{d+1}{|S|}\ \ \text{for every }Q.
\end{gathered}
\end{equation}
\end{theorem}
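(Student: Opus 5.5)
The plan is to apply Carathéodory's theorem to the moment vector of $p^*$. Let $\Phi:\{0,1\}^N\to[-1,1]^d$ be $\Phi(x)=(O_1(x),\dots,O_d(x))$, built from the $d$ bounded observables of Definition~\ref{def:momentloss}. For the losses of Eq.~\eqref{eq:genloss}, capacity is measured by $\dim\mathrm{span}\,\mathcal{F}$, so I would first reduce to a linearly independent family of $d$ observables. If two models agree on a spanning set of functions, they agree on every correlator in $\mathcal{F}$; this justifies working in $\mathbb{R}^d$. The target moment vector is $m^*=\sum_{x\in S}p^*(x)\Phi(x)$. This is a convex combination of the finitely many points $\{\Phi(x):x\in S\}$, so $m^*$ lies in their convex hull, a polytope in $\mathbb{R}^d$.

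Carathéodory's theorem in $\mathbb{R}^d$ gives a subset $S'\subseteq S$ with $|S'|\le d+1$ and weights $\lambda_x\ge0$ summing to one such that $m^*=\sum_{x\in S'}\lambda_x\Phi(x)$. Set $q_{\mathrm{sparse}}(x)=\lambda_x$ on $S'$ and $0$ elsewhere. Then $\langle O_j\rangle_{q_{\mathrm{sparse}}}=\langle O_j\rangle_{p^*}$ for all $j$. Because a moment loss depends on $q$ only through these expectations, $\mathcal{L}_*(q_{\mathrm{sparse}})=\mathcal{L}_*(p^*)$. Since the loss attains its minimum exactly when all moments match, this is a global minimum; for Eq.~\eqref{eq:genloss} it is simply $0$. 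A concrete way to obtain $q_{\mathrm{sparse}}$ is the linear program of Fig.~\ref{fig:lp}: any vertex of the feasible polytope $\{q\ge0:\sum q=1,\ \sum_x q(x)\Phi(x)=m^*,\ \mathrm{supp}\,q\subseteq S\}$ has at most $d+1$ nonzero coordinates.

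The remaining claims are bookkeeping. Since $d+1<|S|$, some $x_0\in S$ has $q_{\mathrm{sparse}}(x_0)=0$ while $p^*(x_0)>0$ by full support. The term $p^*(x_0)\log(p^*(x_0)/0)$ in Eq.~\eqref{eq:fkl} is therefore infinite, giving $\mathrm{KL}=\infty$. For coverage, $\supp(q_{\mathrm{sparse}})\cap U$ has at most $d+1$ elements, so $C_\infty\le (d+1)/|U|$. This reduces to $(d+1)/|S|$ in the setting fixed at the start of Section~\ref{sec:theory}, where $T=\varnothing$ and hence $U=S$. For general $T$ I would state the bound with $|U|$, or note that it is what the statement intends. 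The inequality $\widehat C_Q\le C_\infty$ was already observed after Eq.~\eqref{eq:cbar}.

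The main obstacle is the capacity bookkeeping: matching $\dim\mathrm{span}\,\mathcal{F}$ to the $d$ of Carathéodory. I would handle it by choosing a basis of the span. If the constant function lies in the span, the normalization constraint overlaps with a moment constraint, and the bound improves to $d$, so $d+1$ remains valid. The $|U|$ versus $|S|$ normalization of coverage also needs the $T=\varnothing$ convention to be stated explicitly.
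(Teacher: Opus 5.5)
Your proposal is correct and takes essentially the paper's route. The paper's support-reduction loop, which moves along a kernel vector of the moment-plus-mass map until a coordinate hits zero, is exactly the constructive proof of the Carath\'eodory bound you invoke (and the paper itself cites Carath\'eodory for the $d+1$ count). Your KL and coverage bookkeeping, including the $(d+1)/|U|$ versus $(d+1)/|S|$ caveat resolved by $T=\varnothing$, matches the paper's remarks after its proof.
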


\noindent The target is also an exact minimizer, and the loss takes the same value on both,
so the loss value cannot select between them. By Eq.~\eqref{eq:sparse},
$q_{\mathrm{sparse}}$ does not generalize at any level $\varepsilon$
(Definition~\ref{def:gen}). The value of a training surrogate
therefore carries no guarantee of coverage at deployment, and evaluation must come from
samples. Model structure or optimizer bias may
still select a good minimizer among the exact minimizers. The statement constrains distributions; whether a given
circuit family realizes the sparse minimizer, and whether training reaches one, are separate
questions, and Section~\ref{sec:results} addresses them empirically.

For the deployed losses on the cardinality task the bound is explicit.

\begin{corollary}[Fixed-order correlators on the cardinality task]
\label{cor:covbound}
The order-$\le L$ losses have $d=D_L=\sum_{\ell\le L}\binom{N}{\ell}=O(N^L)$, and
the constant function lies in the span of their observables, so on the uniform cardinality
target they admit an exact global minimizer $q_L$ supported on at most $D_L$ strings, with
\begin{equation}
\label{eq:corgamma}
\begin{gathered}
\widehat C_Q(q_L)\;\le\;C_\infty(q_L)\;\le\;\frac{D_L}{\binom{N}{N/2}}
\;=\;O\!\Big(\frac{N^{L+1/2}}{2^N}\Big)\\
\text{for every }Q.
\end{gathered}
\end{equation}
\end{corollary}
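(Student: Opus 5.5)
The corollary follows from Theorem~\ref{thm:main} by a Carath\'eodory-type argument on the moment polytope, tightened by one dimension because the constant function already lies in the span of the observables. The first step is bookkeeping. For the FBM correlator loss and the truncated MMD of Eq.~\eqref{eq:threelosses}, the observables are the parities $\chi_A$ with $|A|\le L$. These are linearly independent, being characters of the cube, so the capacity is $d=\dim\mathrm{span}\{\chi_A:|A|\le L\}=D_L$. Since $\chi_\varnothing\equiv 1$, the constant function lies in this span.

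Next comes the construction. Consider the linear map $M:\mathbb{R}^S\to\mathbb{R}^{D_L}$, $q\mapsto(\ZA_q)_{|A|\le L}$, restricted to distributions on $S$. The feasible set
\[
P=\{q\ge 0,\ \supp q\subseteq S,\ \langle Z_A\rangle_q=\langle Z_A\rangle_{p^*}\ \forall |A|\le L\}
\]
is a nonempty polytope, since $p^*\in P$. The normalization $\sum_x q(x)=1$ is exactly the $A=\varnothing$ constraint. In standard form, $P$ is cut out of the nonnegative orthant of $\mathbb{R}^{|S|}$ by $D_L$ linear equalities, with normalization among them. Hence any vertex of $P$ (a basic feasible solution) has at most $D_L$ nonzero coordinates. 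This is where we improve on the generic $d+1$ of Theorem~\ref{thm:main}: in the theorem one adds a separate normalization constraint, but here it is already counted in $d$. Take $q_L$ to be such a vertex, which exists because $P$ is a nonempty bounded polyhedron. By Definition~\ref{def:momentloss}, $q_L$ matches all $D_L$ reference moments and is therefore an exact global minimizer of $\mathcal{L}_*$.

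For the coverage bound, $T=\varnothing$ gives $U=S$. Then $C_\infty(q_L)=|\supp q_L|/|S|\le D_L/\binom{N}{N/2}$, and $\widehat C_Q\le C_\infty$ holds for every $Q$ as noted after Eq.~\eqref{eq:cbar}. The asymptotics follow from $D_L=O(N^L)$ for fixed $L$ and Stirling's formula, $\binom{N}{N/2}=\Theta(2^N/\sqrt N)$, which give $O(N^{L+1/2}/2^N)$. The hypothesis $d+1<|S|$, or $D_L<|S|$, holds for large $N$, and otherwise the bound is trivial.

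The only delicate point is the vertex-sparsity count: the number of nonzeros must be bounded by the rank of the constraint system, which is at most $D_L$, including normalization. The cleanest route is the standard LP fact that basic feasible solutions have support at most the number of equality constraints. Alternatively, one can run Carath\'eodory directly: if the columns $\{M e_x\}_{x\in\supp q}$ are linearly dependent, move along a null-space direction, which keeps all moments and the total mass fixed, until a coordinate hits zero, and iterate.
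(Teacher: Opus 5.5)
Your proposal is correct and follows the paper's argument. The paper runs exactly the null-space support-reduction loop you give as your alternative; your basic-feasible-solution count is the same fact, and the paper notes that its linear program constructs the same kind of point. It also gets the improvement from $d+1$ to $d=D_L$ for your reason: $\chi_\varnothing\equiv 1$ makes the mass constraint one of the $D_L$ moment constraints, and the coverage bound and asymptotics then follow from $U=S$, $\widehat C_Q\le C_\infty$, and $\binom{N}{N/2}=\Theta(2^N/\sqrt N)$.
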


The proofs are in Appendix~\ref{app:proof}; the support count is $d+1$ in general and
improves to $d$ when the constant function lies in the span of the observables, as it does
for the deployed losses. The theorem is the
Boolean-cube form of the GAN analysis of Arora et al.~\cite{arora2017}, in which a
discriminator class of bounded capacity is driven to its optimum by generators supported on
roughly capacity-many points, an effect later confirmed by support measurements on trained
GANs~\cite{arorazhang2018}. Our result does not exclude a TCDQ loss that depends on more than a bounded list of
expectations. In particular, Tuysuz et al.\ train a
Fourier-structured model on a classically estimated likelihood, which is a moment loss of no
finite capacity, and pay with a restricted circuit family whose visible marginals
factorize~\cite{tuysuz2026dqgm}.

We construct the minimizer $q_L$ of Corollary~\ref{cor:covbound} explicitly with a linear program
(Fig.~\ref{fig:lp}). At $N=12$ it finds a distribution that matches every correlator of
$p^*$ through order $L=2$ on a support of $66$ strings. Its truncated loss is $\sim\!10^{-28}$, its coverage is
$0.07$ against the target's $1.00$, and its forward KL is infinite
(Table~\ref{tab:worked}). Figure~\ref{fig:demo} visualizes the two distributions.

Theorem~\ref{thm:main} does not apply to the full MMD: the kernel is characteristic
(Section~\ref{sec:framework}), so $p^*$ is the unique exact minimizer of the population
loss $\mathcal{L}_*$~\cite{liu2018mmdqcbm}, and the full MMD remains a principled training
surrogate. Training, however, minimizes the empirical and stochastic objects of
Eq.~\eqref{eq:objects}. The memorizer $\tilde p_T$ attains $\mathcal{L}_T(\tilde p_T)=0$
with $C_\infty(\tilde p_T)=0$. An empirical MMD value therefore cannot substitute for
sample-based evaluation, and Section~\ref{sec:results} evaluates every trained model from its samples.

\section{Benchmarking generalization}
\label{sec:results}

We train thirteen models on two datasets and report how
well each generalizes. For every model we compare its training loss, the eval MMD$^2$, against two
direct measures of generalization, coverage and the forward KL, and ask whether a low loss tracks
them. Because the target $p^*$ is known by design, for the small systems ($N{=}16$ and $N{=}20$)
we rank the models by each score and compare the
rankings; the cardinality task is pushed to $N=30$ by full state-vector simulation, confirming
Section~\ref{sec:theory} in practice: models with nearly the same MMD$^2$ loss differ widely in coverage.
Table~\ref{tab:benchmark} reports MMD$^2$ and coverage on both datasets. Throughout
this section, $C$ denotes the measured coverage $\widehat C_Q$ of Section~\ref{sec:framework}
at the stated budget $Q$.

\subsection{Datasets, models, and protocol}
\label{sec:setup}

\begin{definition}[Cardinality-constrained dataset]
\label{def:dataset}
For even $N$, the \emph{valid set} is the set of bitstrings of Hamming weight $N/2$,
\begin{equation}
S=\{x\in\{0,1\}^N : |x|=N/2\},\quad |S|=\binom{N}{N/2},
\end{equation}
and the \emph{target} $p^*$ is the uniform distribution on $S$. We use $N\in\{16,20,30\}$, for
which $|S|=12{,}870$, $184{,}756$, and $1.55\times10^8$.
\end{definition}

\noindent The fixed-weight rule couples all $N$ bits, so a model must capture their joint
structure. The valid set is known in full, so coverage can be measured exactly.

\begin{definition}[Genomic variant dataset]
\label{def:genomic}
From $M=5{,}008$ observed single-nucleotide-variant sequences~\cite{yelmen2021artificial}, taken
from the PennyLane quantum dataset library~\cite{bergholm2018pennylane}, we keep the first $N$ loci
of each sequence and deduplicate; the distinct observed sequences $x\in\{0,1\}^N$ form the
\emph{valid set}
\begin{equation}
S=\{x : x\in\mathcal{D}\},
\end{equation}
with $|S|=2{,}716$ at $N{=}16$ and $3{,}980$ at $N{=}20$. We define the \emph{target} $p^*$ as the empirical
distribution of $\mathcal{D}$, and a sequence is \emph{valid} if it lies in $S$.
\end{definition}

\noindent The genomic valid set has no fixed Hamming weight: for example, at $N{=}20$ the modal
weight is $12$ and the data span weights from $4$ to $18$. The target $p^*$ is non-uniform, with per-locus allele
frequencies and linkage between loci.

\paragraph{Models.} We evaluate thirteen models. The likelihood-trained group consists of the
TNBM~\cite{han2018mps}, three transformers, three GRU RNNs, and an RBM. The MMD-trained group
consists of the IQP Born machine~\cite{recio2025iqpopt}, the magic FBM~\cite{bako2025fbm}, and
the classical GMMN~\cite{li2015gmmn,dziugaite2015}. The active ($SO(2N)$) and passive FBMs use
fixed-locality correlator losses; the passive FBM also restricts its support to the
fixed-Hamming-weight sector, so it is confined to the cardinality valid set by construction
($F=1$). Appendix~\ref{app:hyper} gives the architectures and optimization settings.

\begin{table*}[t]
\centering
\caption{\textbf{A benchmark of quantum and classical generative models, each scored by both its
training loss and its ability to generate new, valid data.} We present results on two datasets together: the
cardinality task at $N=30$ ($\varepsilon\approx3\times10^{-4}$, corresponding to $|T|=50$K training strings; $Q=10^5$;
three seeds) by full state-vector simulation, and genomic single-nucleotide variants at $N=20$
($\varepsilon=0.20$, corresponding to $|T|=796$; $Q=5{,}008$; five seeds). Values are medians over seeds. Rows group models into likelihood-trained and
moment-trained families. We report eval MMD$^2$ (Gaussian--Hamming kernel, $\sigma{=}1$) for both;
fidelity $F$ and normalized coverage $\tilde C=C/C^*$ for the cardinality columns (the raw coverage
is query-budget-limited at $N=30$); and coverage $C$ for the genomic. Colors encode within-column
rank, from green for better values to red for worse values. Per-size cardinality and
per-$\varepsilon$ results, including the median-bandwidth IQP variant, are in
Appendix~\ref{app:extra}.}
\label{tab:benchmark}
\resizebox{0.7\textwidth}{!}{\input{tables/master_combined.tex}}
\end{table*}

\paragraph{Training protocol.} From a dataset we draw a training set $T$ at
fraction $\varepsilon$: for the cardinality task, $|T|=\lfloor\varepsilon|S|\rfloor$ valid strings
drawn uniformly without replacement from $S$ ($\varepsilon\in\{0.01,0.05,0.10\}$); for the genomic
task, $|T|=\lfloor\varepsilon M\rfloor$ sequences drawn from $\mathcal{D}$
($\varepsilon\in\{0.01,0.05,0.10,0.20\}$). A seed fixes the training subset: at each
$(\varepsilon,\text{seed})$ every model trains on the same $T$, and $T$ is resampled across
seeds, so the spread reported across seeds reflects the choice of training subset. We use three
seeds for the cardinality task and five for the genomic task. Training fixes the
parameters by minimizing a loss $\mathcal{L}$ that compares $q_\theta$ to $T$
(Section~\ref{sec:methods}),
\begin{equation}
\theta^\star=\arg\min_\theta \mathcal{L}(q_\theta,T),
\end{equation}
with an iterative optimizer: gradient descent for the autoregressive, IQP, FBM, and GMMN models,
two-site DMRG for the TNBM, and contrastive divergence for the RBM. Autoregressive models run
$200$ epochs with early stopping; the FBM models run $1{,}500$ to $2{,}000$ gradient steps,
and the IQP model $500$. Full hyperparameters are in Appendix~\ref{app:hyper}.

\begin{figure*}[t]
\centering
\includegraphics[width=\textwidth]{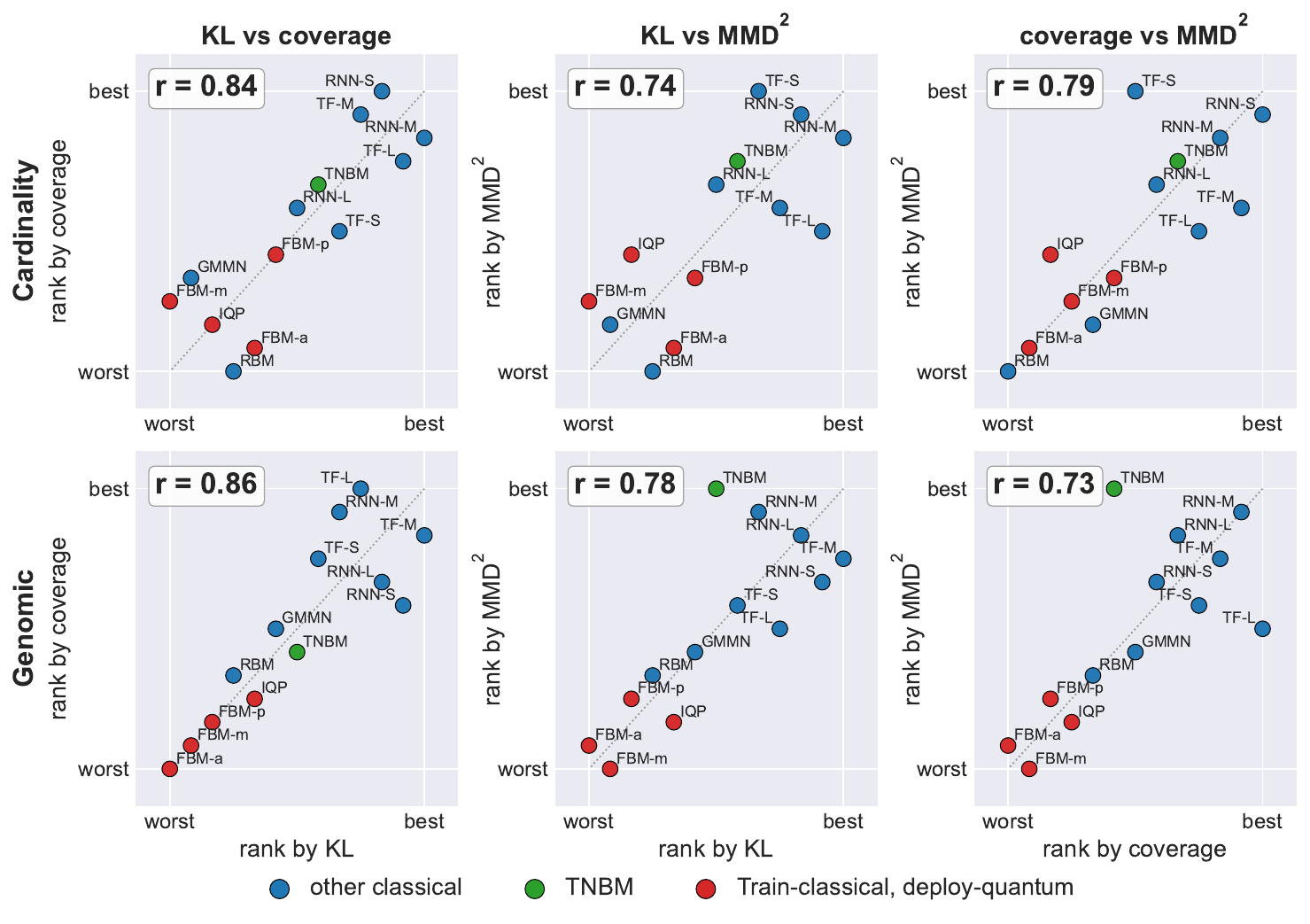}
\caption{\textbf{Rank correlations among MMD$^2$, coverage, and forward KL.}
Each panel ranks the same thirteen models by two scores, best in the top-right corner, for the cardinality
data (top) and the genomic data (bottom), both at $N=16$; each score is a median over seeds, and
$r$ is Spearman's rank correlation between the two scores. The left column pairs the two
generalization measures (coverage and the forward KL); the middle and right columns pair each
against the moment-matching loss MMD$^2$. In the labels, TF is the transformer and S/M/L are the
small, medium, and large variants.}
\label{fig:rank}
\end{figure*}

\begin{figure*}[t]
\centering
\includegraphics[width=0.86\textwidth]{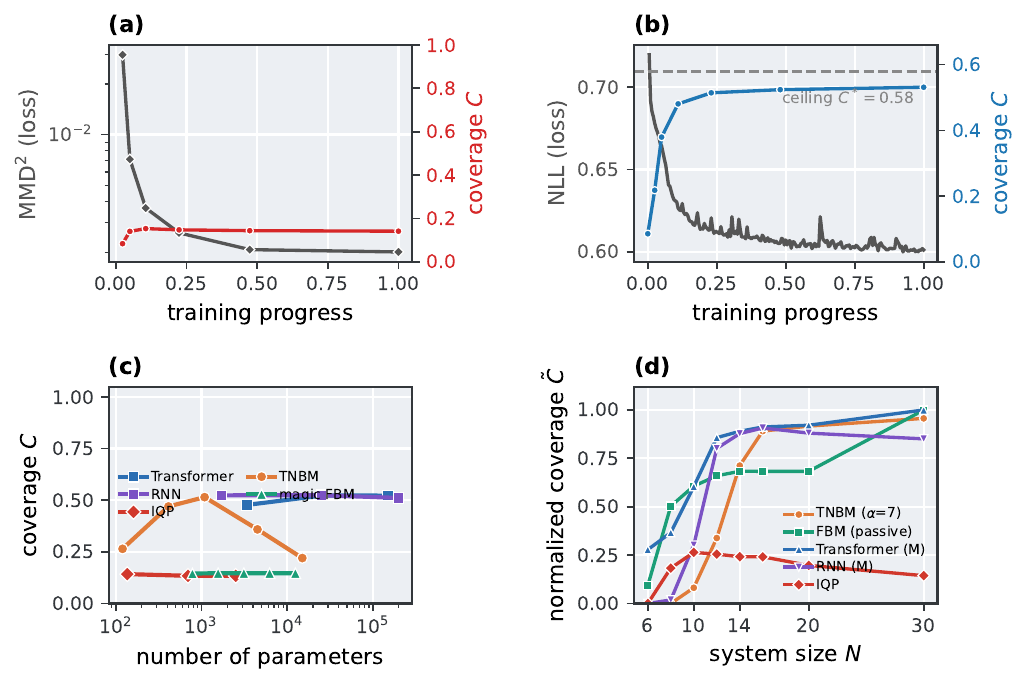}
\caption{\textbf{Loss and coverage decouple during training and across model size.}
\emph{(a)} IQP at $N=16$, $\varepsilon{=}0.10$: the loss (MMD$^2$, gray) falls by an order of
magnitude while coverage (red) stays near $C_{\mathrm{rand}}$. \emph{(b)} a likelihood-trained
transformer: the loss (NLL, gray) falls and coverage (blue) rises to the finite-sample maximum
$C^*$. \emph{(c)} coverage versus parameter count at $N=16$: the classical transformer and RNN and
the TNBM rise with size, while IQP and the magic FBM stay near $C_{\mathrm{rand}}$. \emph{(d)}
normalized coverage $\tilde C=C/C^*$ versus system size $N$ up to $30$ (raw $C$ is not comparable
across $N$ because $|S|$ grows exponentially and the query budget changes with $N$): the
likelihood-trained and structural models increase toward $\tilde C=1$ while IQP holds near
$C_{\mathrm{rand}}$. Throughout, loss is gray and coverage takes each model's group colour.}
\label{fig:mechanism}
\end{figure*}

\paragraph{Sampling protocol.} We evaluate every model by \emph{free sampling}: from the trained
model we draw $Q$ samples and keep all of them ($Q=10^4$ at $N=16$; $Q=10^5$ at $N=20$ and
$N=30$; for the genomic task $Q$ equals the dataset size, $5{,}008$ at $N{=}20$ and $2{,}716$
at $N{=}16$). Coverage is the fraction of the unseen valid set
$S\setminus T$ that appears among the distinct valid samples, and the fidelity $F$ the fraction of
new samples that are valid~\cite{gili2022}. A model that samples uniform random bits over all
$2^N$ strings reaches expected coverage $C_{\mathrm{rand}}=1-(1-2^{-N})^{Q}$; this no-learning
floor is our baseline. When
$|S|\gg Q$ the raw coverage is capped by the budget, so we also report the normalized coverage
$\tilde C=C/C^*$, where $C^*=1-(1-1/|S\setminus T|)^{Q}$ is the coverage an ideal uniform sampler
over $S\setminus T$ reaches in $Q$ draws. Because $p^*$ is known by design, we also compute the
forward KL $\mathrm{KL}(p^*\,\|\,q_{\theta})$ of the trained model.

\begin{table}[t]
\centering
\caption{Effect of model capacity at $N=16$, $\varepsilon=0.10$ (three seeds). Each family has
its own size axis (in parentheses); parameters and coverage $C$ run from the smallest to the
largest setting. Enlarging IQP by $18\times$ in gate count does not raise its coverage above
$C_{\mathrm{rand}}$, while the TNBM rises then over-fits, as illustrated in Fig.~\ref{fig:mechanism}c. The free-fermion FBMs have no capacity parameter.}
\label{tab:capacity}
\setlength{\tabcolsep}{4pt}
\resizebox{\columnwidth}{!}{%
\begin{tabular}{lcc}
\toprule
family (size axis) & params (small$\to$large) & $C$\\
\midrule
Transformer (w/depth)   & $3.4$k/$25.6$k/$150$k & $0.48/0.52/0.52$\\
RNN (hidden dim)        & $1.7$k/$25.3$k/$199$k & $0.52/0.52/0.51$\\
TNBM (bond $\chi$)       & $120/1088/15016$      & $0.27/0.52/0.22$\\
IQP (gate weight)       & $136/696/2516$        & $0.14/0.13/0.13$\\
FBM (passive)           & $120$                 & $0.40$\\
FBM (active)            & $496$                 & $0.11$\\
magic FBM               & $784$                 & $0.14$\\
\bottomrule
\end{tabular}}
\end{table}

\subsection{Cardinality-constrained data}
\label{sec:res_card}

After training, we measure both the MMD$^2$ and the coverage of every model. On this task the two
decouple. The deploy-quantum moment-trained models (the IQP Born machine and the active and magic
FBMs) converge the MMD$^2$ to a low value, yet do not generate valid bitstrings efficiently: a low
loss does not make them generators. The likelihood-trained models reach a comparably low MMD$^2$ and
also cover most of the valid set (Table~\ref{tab:benchmark}). One moment-trained model is the
exception. The passive number-conserving fermionic Born machine reaches high coverage because its
matchgate circuit produces only Hamming-weight-preserving strings, so it is confined to the valid
set by construction rather than by its loss. No other model was biased this way; when the valid set
has known structure, encoding it into the model raises coverage where lowering the loss does not.

At $N=16$, where each model's exact probabilities are available, we rank the models by coverage, by
the forward KL divergence, and by the eval MMD$^2$, and compare the rankings pairwise
(Fig.~\ref{fig:rank}, top row). Coverage and the forward KL agree closely, while the MMD$^2$
correlates poorly with both: the IQP and tensor-network Born machines reach the same MMD$^2$ within
$40\%$, yet cover $0.09$ and $0.41$ of the unseen valid set, a $4.6\times$ gap. The pattern holds
for all evaluated models and across sizes to $N=30$ (Appendix~\ref{app:extra}).

Figure~\ref{fig:mechanism} shows the decoupling directly. During training the IQP Born machine
drives its MMD$^2$ down by an order of magnitude while its coverage stays near $C_{\mathrm{rand}}$
(a), whereas a likelihood-trained transformer lowers its loss and raises its coverage together,
toward the ideal finite-sample ceiling $C^*$ (b). Capacity does not close the gap: enlarging the IQP
circuit $18\times$ in gate count leaves its coverage flat, while the tensor-network and classical
models improve with size (c; Table~\ref{tab:capacity}). The gap widens with qubit count, out to
$N=30$ (d). The decoupling is not an artifact of the kernel bandwidth: sweeping $\sigma$ from sharp
to broad leaves IQP near $C_{\mathrm{rand}}$ (Appendix~\ref{app:sigma}).

\subsection{Genomic variant data}
\label{sec:res_genomic}

On the genomic data the deploy-quantum models are again the weakest generators. The transformer and
RNN reach the highest coverage, with the tensor-network Born machine just behind (all near $C=0.14$;
Table~\ref{tab:benchmark}). These models also fail to reach a low MMD$^2$ here: with no fixed Hamming
weight to exploit, they match neither the data's low-order statistics nor its support. Loss and
coverage correlate slightly better than on the cardinality data (Fig.~\ref{fig:rank}, bottom row),
so whether the MMD$^2$ tracks generalization depends on the dataset, and generalization has to be
measured directly. The passive fermionic Born machine, which covered the cardinality valid set well,
generalizes poorly here, like the other moment-trained models: its Hamming-weight bias fits the
cardinality set but not the genomic one. Genomics-specific diagnostics (per-locus allele
frequencies, pairwise linkage disequilibrium, and a PCA of the generated samples) are reported in
Appendix~\ref{app:genomic}.

\section{Conclusion}
\label{sec:conclusion}

 In this work we measure the ability of quantum generative models to produce new and valid samples directly, by sampling thirteen classical and quantum generative models and scoring the samples they produce. On two application-inspired datasets, at up to $30$ qubits, the models trained with a moment-matching loss generalized worse than the likelihood-trained ones. The moment-matching loss still converged to small values, but these values did not track coverage or the forward KL divergence to the target. A low training loss was therefore not a reliable sign that a model had become a useful generator. The one exception was a fermionic Born machine whose circuit emits only fixed-weight strings, which confined it to the valid set by construction rather than through its loss.

 We also analysed this behavior theoretically. For any loss fixed by a prescribed set of low-order correlators, we prove that exact matching is consistent with almost any coverage. Two distributions can drive the loss to zero while one covers the whole valid set and the other an exponentially small fraction of it. The difference lies in the high-order correlators that the loss never constrains. These results point to a change in practice. The TCDQ paradigm has so far been justified by trainability and by the hardness of sampling. Generalization, however, is the most important requirement, and it should be checked directly by sampling the trained model, not by its training loss.

 Our study has limitations that point to future work. We benchmarked specific families of models and losses, on datasets whose valid set is known and cheap to check. In a real deployment the valid set may itself be uncertain, and measuring coverage would then require a reliable validity oracle. The most direct next step is to replace moment matching with a loss that targets generalization, such as maximum likelihood or an objective with a coverage-aware term. Another is to build the constraint into the model rather than into the loss. The passive fermionic Born machine did this for fixed Hamming weight, and reached high coverage as a result. Although there has been progress on structure-preserving models, in particular tensor networks that embed cardinality and other discrete linear constraints~\cite{lopezpiqueres2023symmetric,lopezpiqueres2025cons} and Hamming-weight-preserving quantum circuits~\cite{monbroussou2025hwpreserving}, finding structure-preserving circuits for other application domains, such as molecular validity or graph properties, remains an open problem. We see generalization as a property to be measured, not assumed from a low loss, and building the losses and models that deliver it as the main task ahead.

\bibliographystyle{apsrev4-2}
\bibliography{references}

\clearpage

\onecolumngrid
\appendix

\begin{center}
{\textbf{\Large Appendix}}\\[0.5em]
{\textbf{\large Table of Contents}}
\end{center}
{\small
\noindent Appendix~\ref{app:theory}\quad Proof of Theorem~\ref{thm:main}\dotfill\pageref{app:theory}\par
\noindent Appendix~\ref{app:efrc}\quad EFRC metrics and coverage normalization\dotfill\pageref{app:efrc}\par
\noindent Appendix~\ref{app:hyper}\quad Models and training details\dotfill\pageref{app:hyper}\par
\noindent Appendix~\ref{app:extra}\quad Additional cardinality-task results\dotfill\pageref{app:extra}\par
\noindent Appendix~\ref{app:sigma}\quad Kernel-bandwidth robustness\dotfill\pageref{app:sigma}\par
\noindent Appendix~\ref{app:genomic}\quad Genomic-task diagnostics\dotfill\pageref{app:genomic}\par
}
\vspace{1em}

\section{Proof of Theorem~\ref{thm:main}}
\label{app:proof}
\label{app:theory}

\begin{thmformal}
Let $S\subseteq\{0,1\}^N$ be a valid set, let $p^*$ be a target with $\supp(p^*)=S$, and let
$O_1,\dots,O_d:\{0,1\}^N\to[-1,1]$ be any bounded observables with $d+1<|S|$.
Let $\mathcal{L}$ be a loss that depends on the model only through the expectations
$\langle O_1\rangle_{q},\dots,\langle O_d\rangle_{q}$ and is minimized exactly when all of them
match $p^*$. Then:
\begin{itemize}
\item[(i)] \emph{Sparse exact minimizers:} There is a distribution $q$ with
$\supp(q)\subseteq S$ and $|\supp(q)|\le d+1$ that matches $p^*$ on every
$\langle O_j\rangle$; when the constant function lies in the span of the $O_j$, the count
improves to $d$, and every bound below holds with $d$ in place of $d+1$. The distribution
$q$ is therefore a global minimizer of $\mathcal{L}$, its support coverage obeys
$C_\infty(q)\le (d+1)/|S|$, and $\mathrm{KL}(p^*\,\|\,q)=\infty$, while the target
$p^*$ is also a global minimizer with $C_\infty(p^*)=1$. Hence a converged loss coexists
with $\widehat C_Q\le (d+1)/|S|$ at every $Q$.
\item[(ii)] \emph{Full-support exact minimizers:} For $0<\eta<1$ let
$q_\eta=(1-\eta)\,q+\eta\,p^*$ with $q$ the minimizer of (i). Then $q_\eta$ has full support
on $S$, so $C_\infty(q_\eta)=1$, it matches $p^*$ on every $\langle O_j\rangle$ and remains a global
minimizer, and yet
\begin{equation}
\label{eq:smoothkl}
\mathrm{KL}(p^*\,\|\,q_\eta)\;\ge\;\alpha\,\log\frac1\eta-\frac1e,
\qquad \alpha=p^*\big(S\setminus\supp(q)\big)\;\ge\;1-\max_{|V|\le d+1}p^*(V)\,.
\end{equation}
Choosing $\eta\le\exp[-(R+1/e)/\alpha]$ makes the forward KL exceed any prescribed finite $R$.
An exactly converged loss is therefore consistent with every value of the forward KL, finite
or infinite.
\end{itemize}
\end{thmformal}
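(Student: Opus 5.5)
The plan for part (i) is Carathéodory's theorem applied to the moment map. I will consider the linear map $\Phi:\mathbb{R}^S\to\mathbb{R}^d$, $\Phi(q)=(\langle O_1\rangle_q,\dots,\langle O_d\rangle_q)$, restricted to the simplex $\Delta_S$ of distributions supported in $S$. The target moment vector $m^*=\Phi(p^*)$ is a convex combination of the $|S|$ points $v_x=(O_1(x),\dots,O_d(x))$, $x\in S$, with weights $p^*(x)$. Carathéodory's theorem in $\mathbb{R}^d$ then writes $m^*$ as a convex combination of at most $d+1$ of the $v_x$.

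I would phrase this as the standard LP/basic-feasible-solution argument, because that version also gives the refinement. The feasible set $\{q\ge0:\ \sum_x q(x)=1,\ \sum_x q(x)O_j(x)=m^*_j\ \forall j\}$ is a nonempty polytope (it contains $p^*$). It therefore has a vertex. At a vertex, the columns of the support are linearly independent in the $(d+1)$-row constraint matrix, so the support has size at most the rank of that matrix, which is at most $d+1$. If the constant function lies in $\mathrm{span}\{O_j\}$, the normalization row is a linear combination of the moment rows on $S$, so the rank is at most $d$ and the support has at most $d$ points.

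The consequences are then immediate:
\begin{itemize}
\item Because $\mathcal{L}$ depends only on the moments and is minimized exactly when they match $p^*$, the vertex $q$ is a global minimizer with $\mathcal{L}_*(q)=\mathcal{L}_*(p^*)$.
\item $C_\infty(q)=|\supp(q)\cap U|/|U|\le (d+1)/|S|$. With $T=\varnothing$ we have $U=S$; for general $T$ one bounds by $|\supp q|/|U|$, and I would note that the paper's statement uses $T=\varnothing$.
\item Because $d+1<|S|$, some $x\in S$ has $p^*(x)>0=q(x)$, so the forward KL is infinite.
\item The bound $\widehat C_Q\le C_\infty$ was already observed in Section~\ref{sec:framework}, since every sample lies in the support.
\end{itemize}

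Part (ii) starts from linearity. The moments are linear in the distribution, so $\Phi(q_\eta)=(1-\eta)m^*+\eta m^*=m^*$, and $q_\eta$ is again an exact minimizer. Its support is $S$ because $\eta p^*>0$ on $S$, so $C_\infty(q_\eta)=1$. For the KL bound I will split the sum over $S$ into $V=\supp(q)$ and $S\setminus V$:
\begin{itemize}
\item On $S\setminus V$ we have $q_\eta=\eta p^*$, so each term equals $p^*(x)\log(1/\eta)$, and the total is $\alpha\log(1/\eta)$.
\item On $V$ the terms $p^*(x)\log(p^*(x)/q_\eta(x))$ can be negative. I will lower-bound their sum with the log-sum inequality: it is at least $p^*(V)\log\bigl(p^*(V)/q_\eta(V)\bigr)\ge p^*(V)\log p^*(V)$, since $q_\eta(V)\le1$. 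Then $t\log t\ge -1/e$ gives the constant $-1/e$.
\end{itemize}
This yields $\mathrm{KL}\ge\alpha\log(1/\eta)-1/e$. Since $|V|\le d+1$, we get $\alpha=1-p^*(V)\ge 1-\max_{|V|\le d+1}p^*(V)$, and this is positive because $|V|<|S|$ and $p^*$ has full support on $S$. Solving $\alpha\log(1/\eta)-1/e\ge R$ gives the stated choice of $\eta$.

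Corollary~\ref{cor:covbound} is then a specialization. Take $O_A=\chi_A$ for $|A|\le L$; the empty set gives the constant function, so the support is at most $D_L$. Divide by $\binom{N}{N/2}=\Theta(2^N/\sqrt N)$.

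The only delicate step is the support-size count in (i), specifically the improvement from $d+1$ to $d$. I would handle it by checking the rank of the constraint matrix restricted to the columns in $S$, rather than relying on Carathéodory in affine form. Equivalently, when $1\in\mathrm{span}\{O_j\}$, I would apply conic Carathéodory to $m^*$ in the cone generated by the $v_x$. In that form normalization is automatic from the moment constraints, and the bound becomes $d$ points.
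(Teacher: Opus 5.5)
Your proposal is correct and follows essentially the paper's route. The vertex (basic-feasible-solution) argument you invoke is the packaged form of the paper's explicit support-reduction loop, with the same rank count on the $(d+1)$-row constraint matrix and the same drop to $d$ when the normalization row lies in the span of the moment rows. For (ii), computing the terms on $S\setminus\mathrm{supp}(q)$ exactly and applying the log-sum inequality on $\mathrm{supp}(q)$ gives the same bound the paper obtains from the data-processing inequality on the two-block partition $\{S\setminus\mathrm{supp}(q),\,\mathrm{supp}(q)\}$.
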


\begin{proof}
\emph{(i)} We work in the coordinates $q(x)$ for $x\in S$ only, so that every candidate is
supported on the valid set by construction. We start at $q=p^*$, which matches all $d$
expectations, and we shrink its support one string at a time. Suppose the current $q$ has
support $\mathcal{R}$ with $|\mathcal{R}|>d+1$. The $d$ maps
$v\mapsto\sum_{x\in \mathcal{R}}v(x)O_j(x)$ and the mass
map $v\mapsto\sum_{x\in \mathcal{R}}v(x)$ together define a linear map
$\mathbb{R}^{|\mathcal{R}|}\to\mathbb{R}^{d+1}$, and since $|\mathcal{R}|>d+1$ its kernel
contains some $v\ne0$; every such $v$ satisfies $\sum_x v(x)=0$ by the last coordinate. We
replace $q$ by $q+tv$. All $d$ expectations are unchanged and the total mass stays
$1$, so $q+tv$ still matches $p^*$ on every $O_j$. Because the coordinates of $v$ sum to
zero, $v$ has a coordinate of each sign, and $q>0$ on $\mathcal{R}$, so we can grow $|t|$ in
the direction that decreases some coordinate until the first coordinate of $q+tv$ reaches
zero, which happens at a finite $t$. The support has strictly shrunk, and we repeat the
step. The process terminates, and it can stop only once the support holds at most $d+1$
strings. When the constant function lies in the span of the $O_j$, the mass map is a linear
combination of the $d$ maps, the joint system has rank at most $d$, and the same loop runs
until the support holds at most $d$ strings. The resulting $q$ is supported on at most $d+1$
strings of $S$ and matches $p^*$ on every $\langle O_j\rangle$, so it attains the minimum
of $\mathcal{L}$. Its support coverage obeys
$C_\infty(q)\le|\supp(q)|/|S|\le(d+1)/|S|$, and every distinct sampled string lies in
$\supp(q)$, so $\widehat C_Q(q)\le C_\infty(q)\le(d+1)/|S|$ at every $Q$. The target has
full support on $S$, so it too is a global minimizer, with $C_\infty(p^*)=1$; the pair
$(p^*,q)$ takes the same loss value, which is the statement of the informal theorem.

\emph{(ii)} Mixtures are linear in expectations, so
$\langle O_j\rangle_{q_\eta}=(1-\eta)\langle O_j\rangle_{q}+\eta\langle O_j\rangle_{p^*}
=\langle O_j\rangle_{p^*}$ for every $j$: $q_\eta$ remains a global minimizer, and it has full
support on $S$ because $p^*$ does. For the KL bound, we write $G=S\setminus\supp(q)$ and
$\alpha=p^*(G)$. Then $q_\eta(G)=\eta\,p^*(G)=\eta\alpha$, and the data-processing inequality
applied to the binary partition $\{G,G^c\}$ gives
\[
\mathrm{KL}(p^*\,\|\,q_\eta)\;\ge\;
\alpha\log\frac{\alpha}{\eta\alpha}+(1-\alpha)\log\frac{1-\alpha}{1-\eta\alpha}
\;\ge\;\alpha\log\frac1\eta+(1-\alpha)\log(1-\alpha)
\;\ge\;\alpha\log\frac1\eta-\frac1e,
\]
using $\log(1-\eta\alpha)<0$ and $t\log t\ge-1/e$ on $(0,1]$. Finally,
$\alpha\ge1-\max_{|V|\le d+1}p^*(V)$ because $\supp(q)$ is a set of at most $d+1$
points, and the choice
$\eta\le\exp[-(R+1/e)/\alpha]$ rearranges to $\alpha\log(1/\eta)-1/e\ge R$.

\end{proof}

\noindent The proof of (i) is a support-reduction loop and produces the minimizer inside $S$
directly; the linear program of Fig.~\ref{fig:lp} constructs the same kind of point.
The support count is the classical one: a distribution constrained on $d$ functionals
is matched by one on at most $d+1$ points (Carath\'eodory~\cite{caratheodory1911};
Tchakaloff~\cite{tchakaloff1957,bayerteichmann2006}; Richter~\cite{richter1957}).
The main text takes $T=\varnothing$, so the unseen set is $S$ itself. For a nonempty
training set the unseen set has $|S|-|T|$ strings, with $|T|$ the number of distinct training
strings; each bound becomes $(d+1)/(|S|-|T|)$, and
the change has relative size $O(|T|/|S|)$.

We work one example in full, taking $N=12$ and $L=2$. Then $D_L=1+12+66=79$ and
$|S|=\binom{12}{6}=924$, so the bound of Corollary~\ref{cor:covbound} is
$79/924\approx0.09$. The construction of Table~\ref{tab:worked} reaches support $66$ and
coverage $0.07$. For $L=3$, $D_L=299$ gives the bound $0.32$ and the construction reaches
$0.24$.

\begin{table}[t]
\centering
\small
\begin{tabular}{lccccc}
\toprule
model & $\MMD^2_{\sigma,\le L}$ & $\MMD^2$ (full) & $\mathrm{TV}(p^*,q)$ & $\mathrm{KL}(p^*\,\|\,q)$ & coverage\\
\midrule
target (uniform on $S$)      & $0$               & $0$                & $0$    & $0$      & $1.00$\\
sparse, $L=2$ (support $66$)  & $1\times10^{-28}$ & $9.6\times10^{-3}$ & $0.93$ & $\infty$ & $0.07$\\
sparse, $L=3$ (support $220$) & $5\times10^{-25}$ & $1.5\times10^{-3}$ & $0.78$ & $\infty$ & $0.24$\\
\bottomrule
\end{tabular}
\caption{The uniform target and two sparse exact minimizers of the truncated loss at $N=12$. The truncated loss is numerically zero
for every row, while coverage and the KL separate the rows. The full MMD is nonzero at the low-coverage rows, consistent with its characteristic
kernel.}
\label{tab:worked}
\end{table}

\section{EFRC metrics and coverage normalization}
\label{app:efrc}
This appendix collects the sample-based metrics of Gili et al.~\cite{gili2022} in the
notation of the main text. As in Section~\ref{sec:framework}, $S\subseteq\{0,1\}^N$ is the
valid set, $T$ the training set, and $U=S\setminus T$ the set of unseen valid strings; the
training fraction is $\varepsilon=|T\cap S|/|S|$, so $|U|=|S|(1-\varepsilon)$. From the
trained model we draw $Q$ samples $Y_1,\dots,Y_Q$. Among these, let $G_{\mathrm{new}}$
denote the samples that lie outside $T$, counted with multiplicity; let $G_{\mathrm{sol}}$
denote the samples that lie in $U$, that is, are both new and valid, again with
multiplicity; and let $g_{\mathrm{sol}}$ denote the set of \emph{distinct} strings in
$G_{\mathrm{sol}}$. The four EFRC metrics are
\begin{equation}
\label{eq:efrc}
E=\frac{|G_{\mathrm{new}}|}{Q},\qquad
F=\frac{|G_{\mathrm{sol}}|}{|G_{\mathrm{new}}|},\qquad
R=\frac{|G_{\mathrm{sol}}|}{Q}=E\,F,\qquad
C=\frac{|g_{\mathrm{sol}}|}{|U|}.
\end{equation}
The exploration $E$ is the fraction of samples that leave the training set; the fidelity
$F$ is the fraction of those new samples that are valid; the rate $R$ is the fraction of
all samples that are new and valid; and the coverage $C$ is the fraction of the unseen
valid set that the $Q$ samples reach, which is the measured coverage $\widehat C_Q$ of
Section~\ref{sec:framework}.

Two of the metrics are reported in normalized form. The normalized rate is
$\tilde R = R/(1-\varepsilon)$, which compares $R$ to the value $1-\varepsilon$ that an
exact sampler of the uniform target attains in expectation. The normalized coverage is
$\tilde C = C/C^*$, where $C^* = 1-(1-1/|U|)^{Q}$ is the expected coverage that an ideal
uniform sampler over $U$ attains after $Q$ draws. Under i.i.d.\ sampling from a
distribution supported on $U$, the uniform distribution maximizes expected coverage, so its
expected normalized coverage is $1$; observed $\tilde C$ can fluctuate above $1$.

\section{Models and training details}
\label{app:hyper}
All models train on the same training set $T$ at each $(\varepsilon,\text{seed})$ pair and
are evaluated by the free-sampling protocol of Section~\ref{sec:setup}, with three seeds for
the cardinality task and five for the genomic task. The circuit families are implemented in
JAX, the neural models in PyTorch, and the RBM wraps scikit-learn; every model trains on
CPU, and the gradient-based models use the Adam optimizer.

\paragraph{Likelihood-trained models.} The TNBM is the matrix-product-state Born machine of
Han et al.~\cite{han2018mps}, trained by two-site DMRG on the negative log-likelihood with
an adaptive bond dimension: each local update takes $10$ gradient substeps of size $0.01$,
singular values below $10^{-2}$ are truncated, the bond dimension is capped at $\alpha=7$,
and training runs for up to $100$ sweeps. The transformers are decoder-only autoregressive
models over the $N$ bits at three sizes, with $d_{\rm model}=16/32/64$, $2/4/4$ attention
heads, $1/2/3$ layers, and dropout $0.1$, giving $3.4$K$/25.6$K$/150$K parameters. The RNNs
are autoregressive GRUs with hidden dimension $16/64/128$ and $1/1/2$ layers
($1.7$K$/25$K$/199$K parameters). All autoregressive models minimize the cross-entropy
(negative log-likelihood) at learning rate $10^{-3}$ with batch size $64/128/256$
(small/medium/large) for up to $200$ epochs, with early stopping at patience $20$. The RBM
has $24$ hidden units and trains by persistent contrastive divergence (learning rate
$0.01$, batch size $64$, $200$ iterations); its samples are drawn by $1{,}000$ steps of
block Gibbs sampling from random initial states.

\paragraph{Moment-trained models.} The IQP Born machine follows
Ref.~\cite{recio2025iqpopt}: a three-layer parametrized IQP circuit (initialization scale
$0.1$) trained on the MMD$^2$ with the Gaussian--Hamming kernel at $\sigma=1$, estimated by
the sampled-spectrum den Nest estimator with $100$ Pauli-$Z$ words and $256$ samples per
step, for $500$ steps at learning rate $0.01$. The passive (number-conserving) FBM
parametrizes $O=e^{W}\in SO(N)$ with $N(N{-}1)/2$ parameters ($120$ at $N=16$, $190$ at
$N=20$); its correlation matrix $C=O_{[:,:k]}O_{[:,:k]}^{\top}$ is a rank-$k$ projector, so
the Born distribution lives entirely in the $\mathrm{HW}{=}k$ sector and $F=1$ holds by
construction. It trains on the mean-square error of all Pauli-$Z$ string correlators of
order at most $3$ against their training-set values, at learning rate $0.05$ for $1{,}500$
steps. The active FBM is the compound-matchgate model of Ref.~\cite{bako2025fbm} ($496$
parameters at $N=16$), trained on the same correlator loss with order at most $5$, at
learning rate $10^{-3}$ for $2{,}000$ steps. The magic FBM augments the matchgate circuit
with one non-Gaussian input layer and one ancilla mode ($784$ parameters at $N=16$) and
trains on a stochastic estimate of the $\sigma=1$ MMD$^2$ with $200$ model samples per
step, at learning rate $10^{-3}$ for $2{,}000$ steps. The GMMN~\cite{li2015gmmn,dziugaite2015}
is a multilayer perceptron that maps a $16$-dimensional Gaussian latent vector through two
hidden layers of $256$ units to $N$ logits, made differentiable by the binary-concrete
(Gumbel--sigmoid) relaxation at temperature $\tau=0.5$; it trains on the $\sigma=1$
Gaussian--Hamming MMD$^2$ between batches of $256$ generated and $256$ training samples, at
learning rate $10^{-3}$ for $5{,}000$ steps.

\section{Additional cardinality-task results}
\label{app:extra}
Table~\ref{tab:benchmark} reports the cardinality task at $N=30$ and the genomic variants at $N=20$;
the full per-$\varepsilon$ results below, at $N=16$ and $N=20$, list exploration $E$ and rate $R$
in addition to $F$ and $C$, and report both the $\sigma{=}1$ and median-heuristic MMD$^2$ kernels.

\begin{figure}[!ht]
\centering
\includegraphics[width=0.92\textwidth]{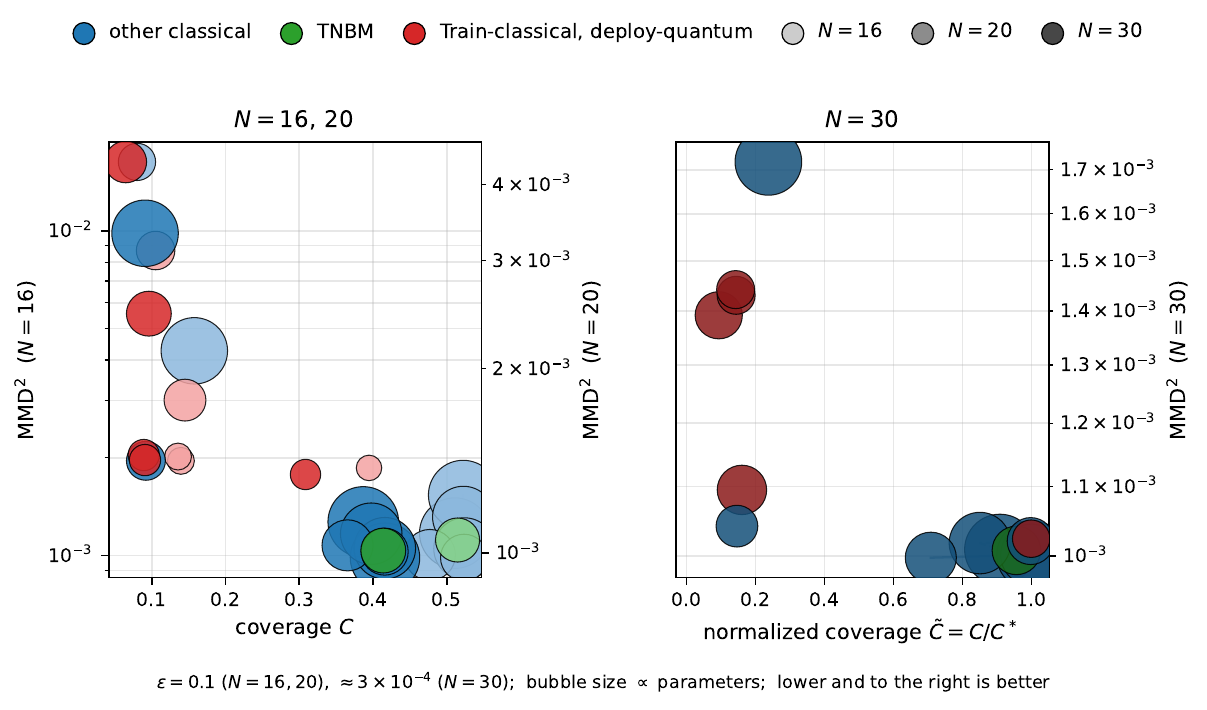}
\caption{\textbf{Training loss does not track coverage across the model zoo.} The same data as
Table~\ref{tab:benchmark}, shown as a scatter. \emph{Left:} the cardinality task at $N=16$ and $N=20$
($\varepsilon=0.10$, twin MMD$^2$ axes); \emph{right:} $N=30$ with normalized coverage
$\tilde C=C/C^*$. Each bubble is a model, colored by deployment group as in Fig.~\ref{fig:rank}
(red: train-classical, deploy-quantum; green: TNBM; blue: other classical), with
lighter-to-darker shade marking $N=16/20/30$ and size proportional to parameter count; lower and
to the right is better. The deploy-quantum moment-trained models reach only low coverage, the
exception being the structurally-constrained passive FBM on the cardinality task, yet their
MMD$^2$ spans the same range as the likelihood models that generalize.}
\label{fig:scatter}
\end{figure}

\input{tables/master_n16.tex}
\input{tables/master_n20.tex}

Figure~\ref{fig:dynfull} shows the all-models training dynamics at both $N=16$ and $N=20$: every
model's MMD$^2$ converges to the same numerical range while coverage separates.

\begin{figure}[!ht]
\centering
\includegraphics[width=0.98\columnwidth]{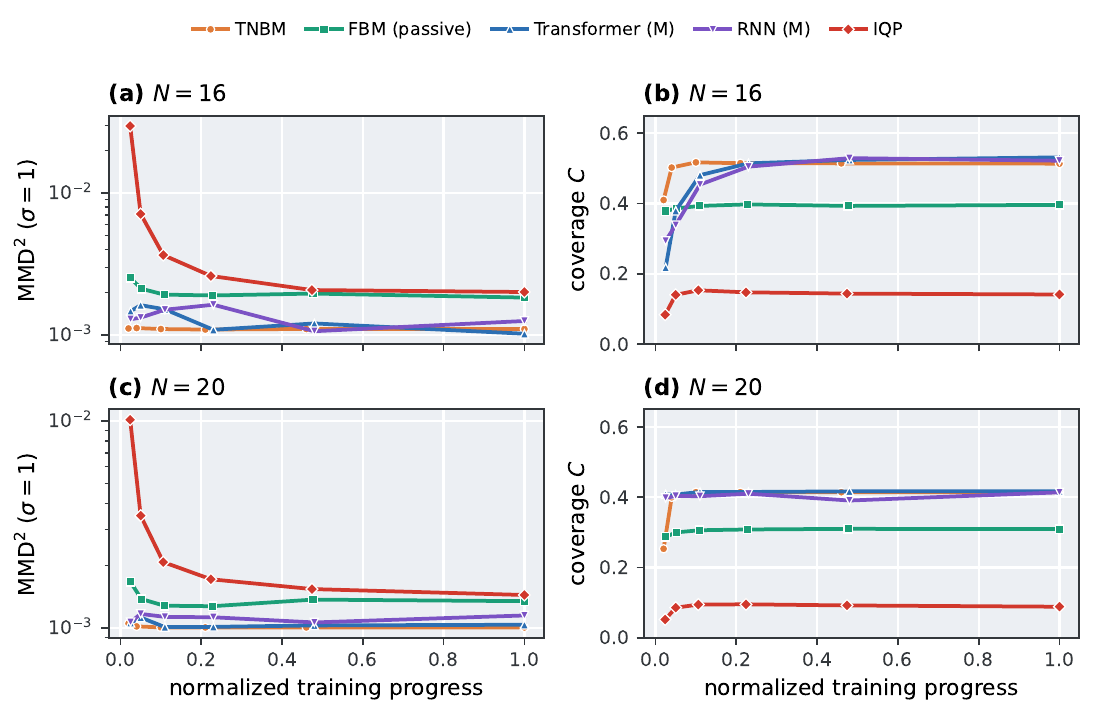}
\caption{\textbf{Loss converges while coverage separates.} Training trajectories at $N=16$ and $N=20$ for five models ($\varepsilon{=}0.10$; $N=16$
top, $N=20$ bottom). \emph{Left:} the MMD$^2$ converges to the same numerical range for every model. \emph{Right:}
coverage $C$ separates them, the TNBM, transformer, and RNN increasing while IQP and the active and
magic FBM stay near $C_{\mathrm{rand}}$.}
\label{fig:dynfull}
\end{figure}

The decoupling at the largest size is shown directly in Fig.~\ref{fig:n30}: every model's
MMD$^2$ converges to the same numerical range while the normalized coverage separates. In normalized coverage
the distance between the best model and IQP grows from $0.28$ at small $N$ to $0.86$ at $N=30$
(main-text Fig.~\ref{fig:mechanism}d). Fig.~\ref{fig:n30loss} shows that at $N=30$ the sampled IQP
MMD$^2$ does not descend during training, whereas the transformer NLL converges with coverage near
the finite-sample maximum.

\begin{figure}[!ht]
\centering
\includegraphics[width=0.98\columnwidth]{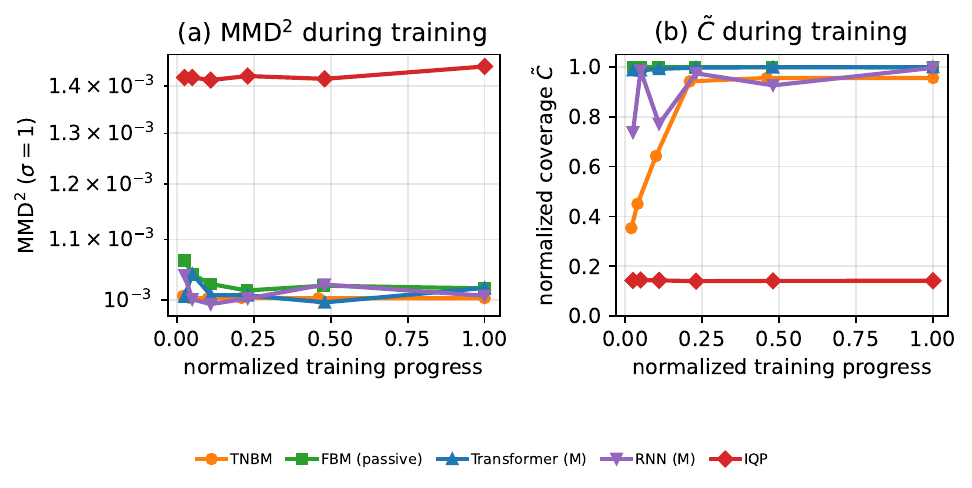}
\caption{\textbf{The decoupling persists at $N=30$.} Training trajectories at $N=30$, HW$=15$ ($|S|\approx1.55\times10^8$,
$\varepsilon\approx3.2\times10^{-4}$). \emph{(a)} every model's MMD$^2$ converges to the same
$\sim10^{-3}$ range. \emph{(b)} normalized coverage $\tilde C$ separates them: passive FBM,
transformer, and RNN approach $1$, the TNBM reaches $0.96$, and IQP stays near $C_{\mathrm{rand}}$.}
\label{fig:n30}
\end{figure}

\begin{figure}[!ht]
\centering
\includegraphics[width=0.98\columnwidth]{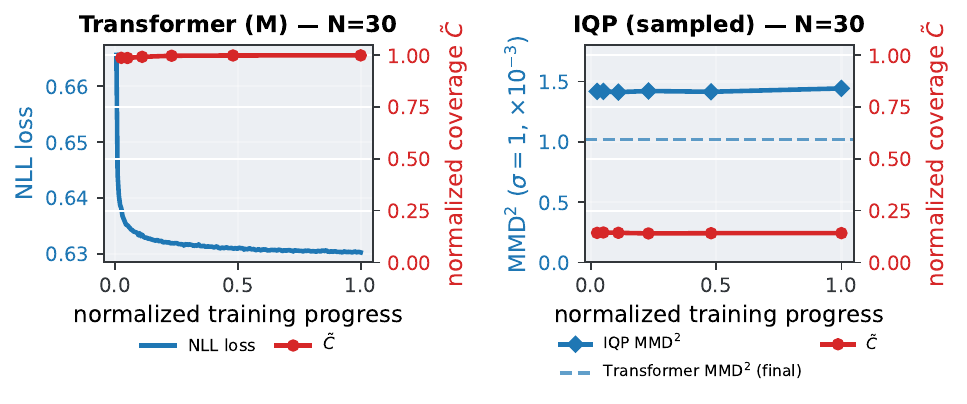}
\caption{\textbf{Loss-level view at $N=30$.} \emph{Left:} the transformer NLL falls and $\tilde C$ sits near the finite-sample maximum. \emph{Right:} the sampled IQP MMD$^2$ stays flat at $\approx1.4\times10^{-3}$
with $\tilde C$ near $C_{\mathrm{rand}}$, and never drops below the transformer's final MMD$^2$
(dashed): the NLL-trained model reaches a lower MMD$^2$ than the MMD-trained one.}
\label{fig:n30loss}
\end{figure}

\begin{figure}[!ht]
\centering
\includegraphics[width=0.98\columnwidth]{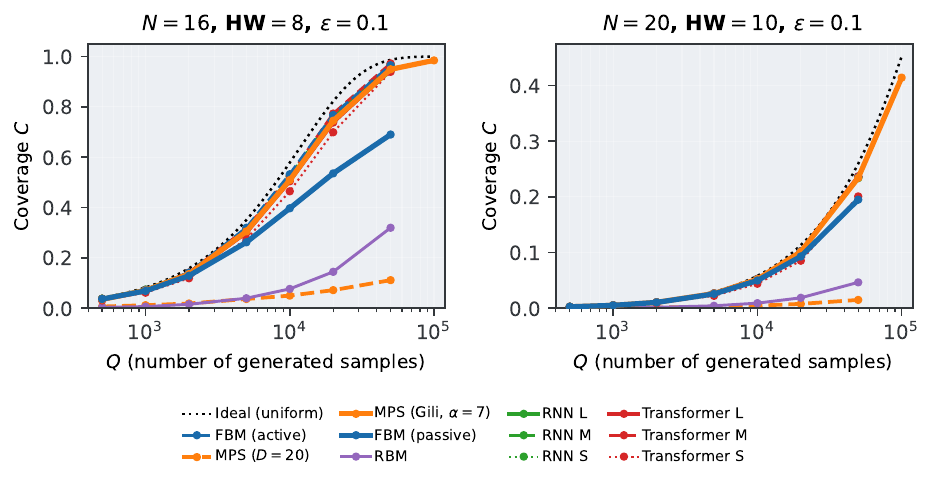}
\caption{\textbf{Low coverage is not a finite-sampling artifact.} Coverage versus query budget $Q$ at $N=16$ and $N=20$. The likelihood-trained and
structural models increase toward $C^*$ as $Q$ grows; the moment-trained models
stay near $C_{\mathrm{rand}}$ across the budgets we study. In the legend,
``MPS (Gili, $\alpha$=7)'' is the TNBM, and ``MPS ($D$=20)'' is a fixed-bond
($\chi=20$) MPS baseline.}
\label{fig:covq}
\end{figure}

\begin{figure}[!ht]
\centering
\includegraphics[width=0.72\columnwidth]{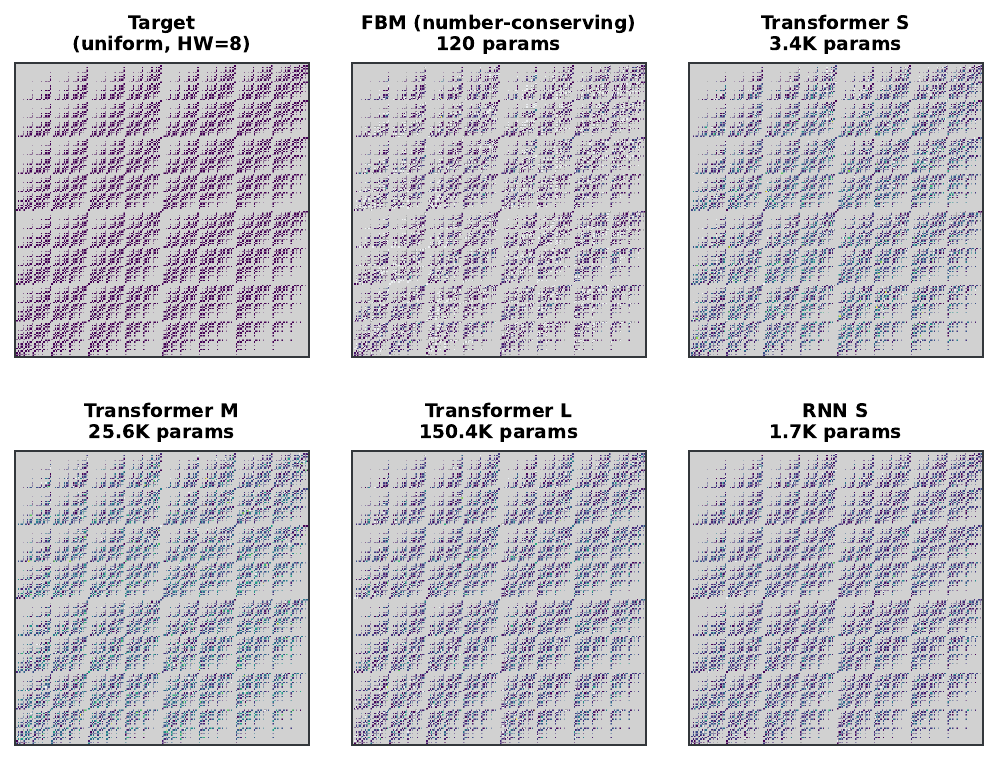}
\caption{\textbf{Where the models place their probability mass.} The full $2^{16}$ bitstring space at $N=16$, HW$=8$, rendered as a $256\times256$
grid; gray cells are off the HW$=8$ slice, colored cells are HW$=8$ with intensity
proportional to sample probability. The number-conserving FBM places all its mass on the
valid slice by construction; the moment-trained models assign substantial probability outside the valid sector.}
\label{fig:grid}
\end{figure}

\section{Kernel-bandwidth robustness}
\label{app:sigma}
IQP's coverage near $C_{\mathrm{rand}}$ is not an artifact of the kernel bandwidth. At $N=16$, HW$=8$, sweeping
$\sigma$ from the median heuristic of~\cite{recio2025iqpopt} to sharper and flatter values
leaves IQP coverage between $0.13$ and $0.16$, near the uniform Born weight on the HW$=8$ sector.
\begin{center}\small
\begin{tabular}{lccc}
\toprule
kernel bandwidth $\sigma$ & $F$ & $C$ & Born mass on HW$=k$\\
\midrule
median (Hamming, $=8$)    & 0.21 & 0.15 & 0.22\\
median (Euclidean, $=2.83$) & 0.18 & 0.13 & 0.20\\
$\sigma=1$ (main tables)  & 0.19 & 0.14 & 0.20\\
$\sigma=0.5$              & 0.22 & 0.16 & 0.24\\
\bottomrule
\end{tabular}
\end{center}
No bandwidth raises IQP's coverage above $C_{\mathrm{rand}}$; the main tables use $\sigma=1$, which
gives $C=0.14$.

\begin{figure}[!ht]
\centering
\includegraphics[width=0.85\columnwidth]{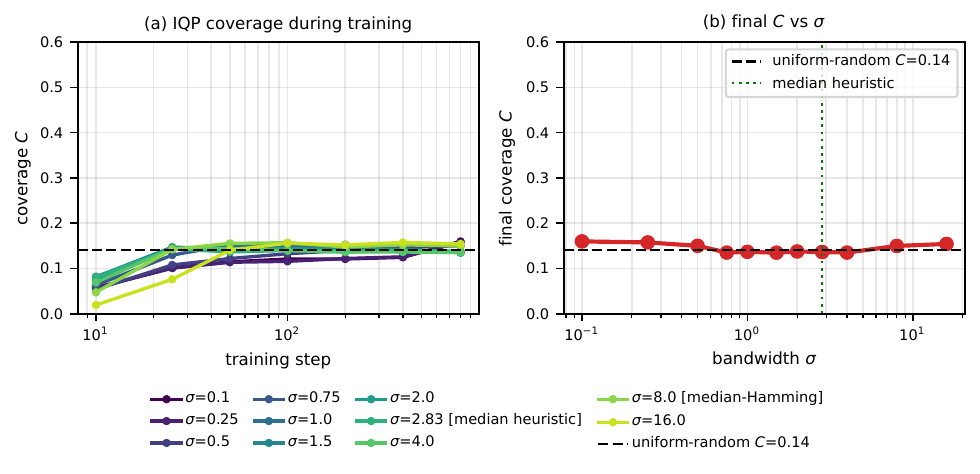}
\caption{\textbf{No kernel bandwidth rescues IQP coverage.} IQP coverage as the kernel
bandwidth $\sigma$ is swept from sharp ($0.1$) to broad ($16$) at $N=16$, including the median
heuristic (Euclidean, $\sigma=2.83$) and the median Hamming value ($\sigma=8$). \emph{(a)}
coverage during training for each $\sigma$; \emph{(b)} final coverage versus $\sigma$. IQP stays
at $0.13$ to $0.16$, near the random-bit baseline $C_{\mathrm{rand}}$ (dashed), for every
$\sigma$; the likelihood-trained baselines of Table~\ref{tab:capacity} reach $\sim0.5$ on the
same task.}
\label{fig:sigma}
\end{figure}

\section{Genomic-task diagnostics}
\label{app:genomic}
Beyond coverage, we check the generated genomic samples against standard population-genetics
statistics at $\varepsilon{=}20\%$. Figure~\ref{fig:af_ld} compares per-locus allele frequencies
of generated versus reference data; Fig.~\ref{fig:ldmat} shows the pairwise linkage-disequilibrium
(LD, $r^2$) matrices; Fig.~\ref{fig:hwdist} the Hamming-weight distributions; and
Fig.~\ref{fig:genpca} a PCA of the generated samples against the reference. The likelihood-trained
generators (transformer, RNN) and the TNBM reproduce the allele frequencies, the LD structure, and
the sample cloud closely, whereas the deploy-quantum moment-trained models depart from all three,
consistent with their low coverage.

\begin{figure}[!ht]
\centering
\includegraphics[width=0.95\columnwidth]{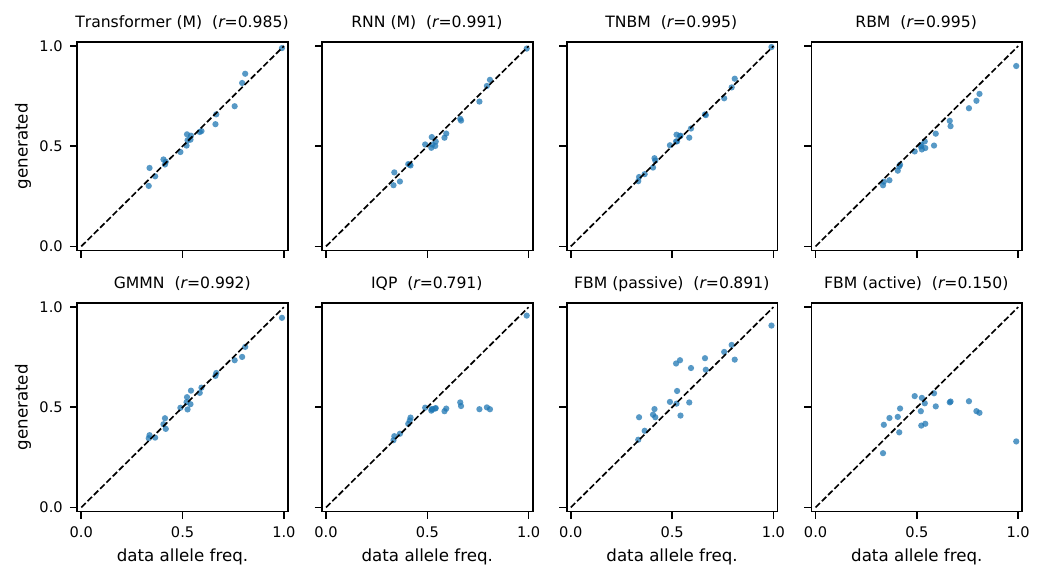}
\caption{\textbf{Allele frequencies.} Per-locus allele frequencies of generated samples against the reference data
($\varepsilon{=}20\%$, $N{=}20$), for eight of the benchmarked models; $r$ is the Pearson correlation with the reference frequencies.}
\label{fig:af_ld}
\end{figure}

\begin{figure}[!ht]
\centering
\includegraphics[width=0.78\columnwidth]{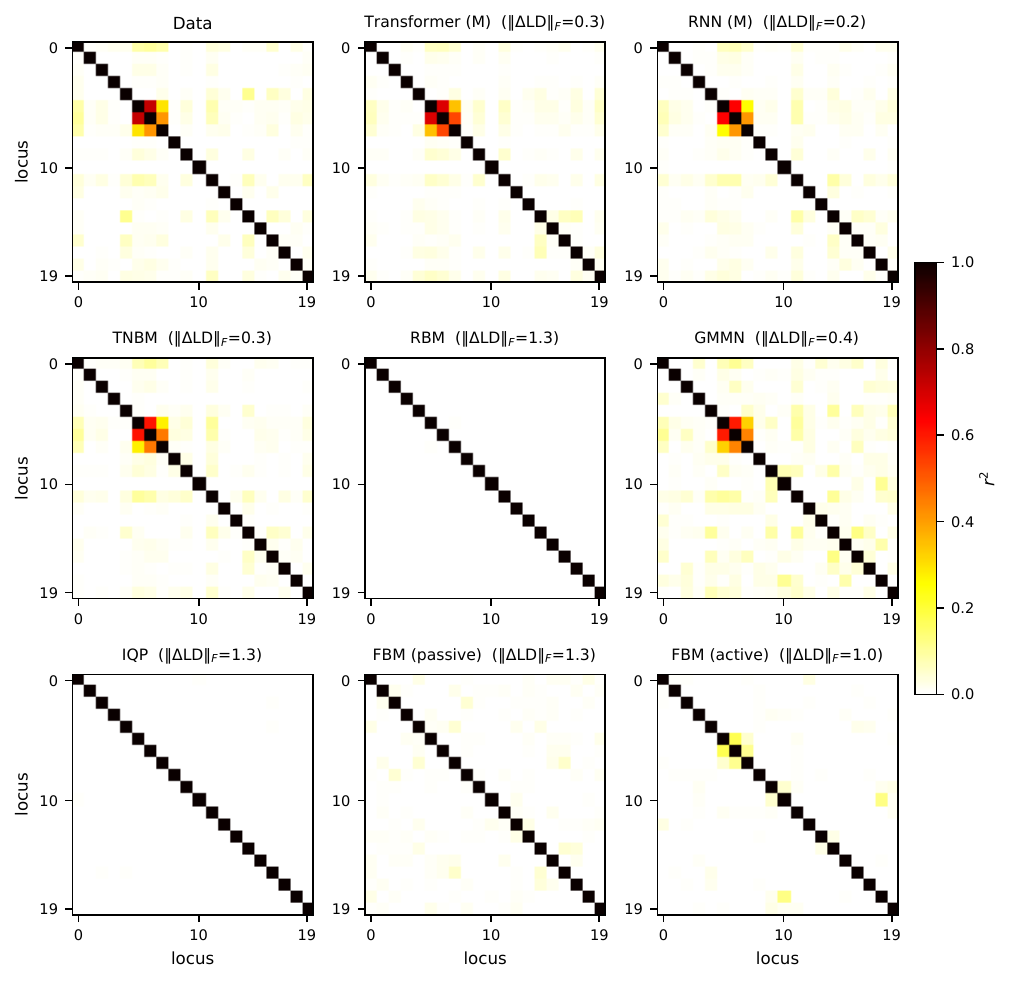}
\caption{\textbf{Linkage disequilibrium.} Pairwise linkage-disequilibrium ($r^2$) matrices for the reference data and eight of the
benchmarked models at $\varepsilon{=}20\%$, $N{=}20$; each panel reports the Frobenius distance to the reference matrix.}
\label{fig:ldmat}
\end{figure}

\begin{figure}[!ht]
\centering
\includegraphics[width=0.95\columnwidth]{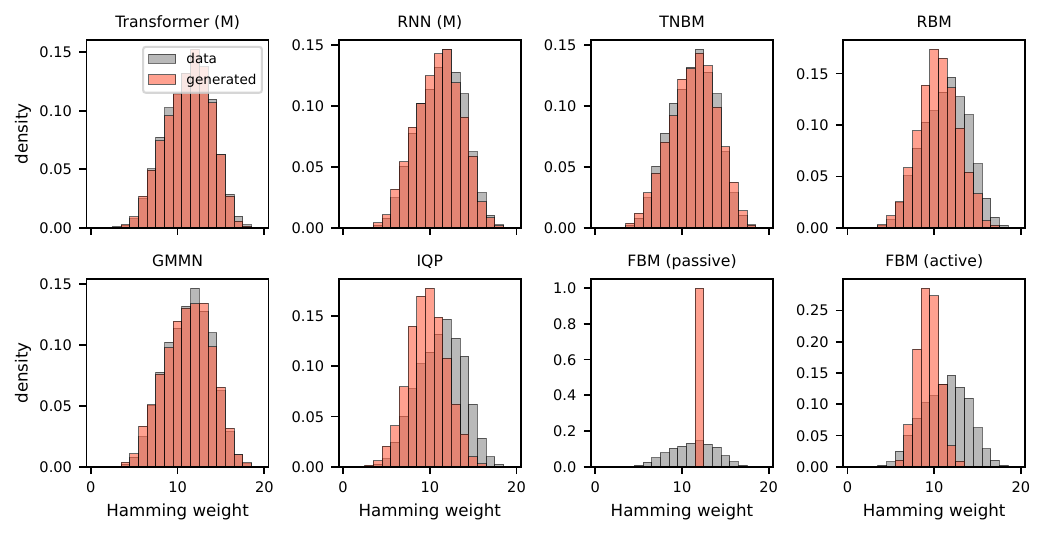}
\caption{\textbf{Hamming-weight distributions.} Hamming-weight distributions of generated samples (red) against the reference data
(gray) at $\varepsilon{=}20\%$, $N{=}20$.}
\label{fig:hwdist}
\end{figure}

\begin{figure}[!ht]
\centering
\includegraphics[width=0.95\columnwidth]{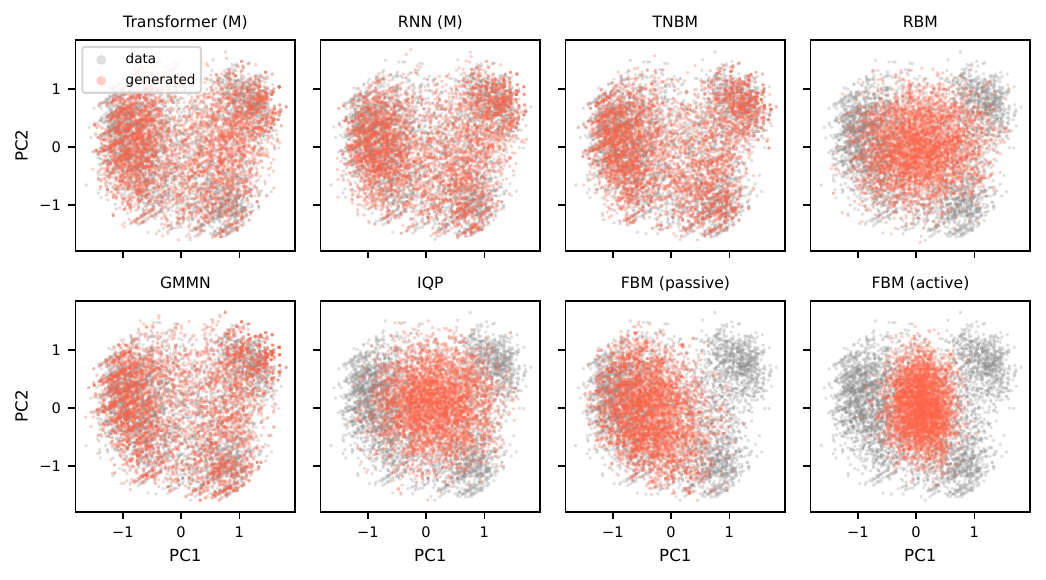}
\caption{\textbf{PCA projections.} Projections of generated samples (red) and reference data (gray) onto the first two principal
components of the reference data, at $\varepsilon{=}20\%$, $N{=}20$.}
\label{fig:genpca}
\end{figure}

\end{document}

%% file: tables/master_combined.tex
\begin{tabular}{l l *{3}{c} @{\hspace{1.5em}}!{\vrule} @{\hspace{1.5em}} *{2}{c}}
\toprule
& & \multicolumn{3}{c}{\textbf{Cardinality} ($N{=}30$)} & \multicolumn{2}{c}{\textbf{Genomic} ($N{=}20$)}\\
\cmidrule(lr){3-5}\cmidrule(lr){6-7}
Model & loss & MMD$^2$ & $F$ & $\tilde{C}$ & MMD$^2$ & $C$\\
\midrule
\multicolumn{7}{l}{\emph{Likelihood-trained}}\\
Transformer (L) & cross-ent & \cellcolor[rgb]{0.62,0.86,0.58} 1.00e{-}3 & 1.00 & \cellcolor[rgb]{0.62,0.86,0.58} 1.00 & \cellcolor[rgb]{0.68,0.88,0.65} 1.58e{-}3 & \cellcolor[rgb]{0.68,0.88,0.65} 0.130\\
Transformer (M) & cross-ent & \cellcolor[rgb]{0.62,0.86,0.58} 9.99e{-}4 & 1.00 & \cellcolor[rgb]{0.62,0.86,0.58} 1.00 & \cellcolor[rgb]{0.68,0.88,0.64} 1.49e{-}3 & \cellcolor[rgb]{0.63,0.86,0.59} 0.140\\
RNN (S) & cross-ent & \cellcolor[rgb]{0.63,0.86,0.59} 1.02e{-}3 & 1.00 & \cellcolor[rgb]{0.62,0.86,0.58} 1.00 & \cellcolor[rgb]{0.76,0.91,0.73} 3.05e{-}3 & \cellcolor[rgb]{0.93,0.97,0.92} 0.084\\
TNBM & NLL & \cellcolor[rgb]{0.63,0.86,0.59} 1.01e{-}3 & 0.96 & \cellcolor[rgb]{0.65,0.87,0.62} 0.96 & \cellcolor[rgb]{0.62,0.86,0.58} 8.17e{-}4 & \cellcolor[rgb]{0.63,0.86,0.59} 0.140\\
RNN (M) & cross-ent & \cellcolor[rgb]{0.63,0.86,0.59} 1.01e{-}3 & 0.91 & \cellcolor[rgb]{0.68,0.88,0.65} 0.91 & \cellcolor[rgb]{0.65,0.87,0.61} 1.08e{-}3 & \cellcolor[rgb]{0.62,0.86,0.58} 0.141\\
RNN (L) & cross-ent & \cellcolor[rgb]{0.63,0.86,0.59} 1.01e{-}3 & 0.88 & \cellcolor[rgb]{0.71,0.89,0.68} 0.88 & \cellcolor[rgb]{0.66,0.87,0.62} 1.21e{-}3 & \cellcolor[rgb]{0.67,0.88,0.64} 0.132\\
Transformer (S) & cross-ent & \cellcolor[rgb]{0.62,0.86,0.58} 9.97e{-}4 & 0.70 & \cellcolor[rgb]{0.85,0.94,0.83} 0.70 & \cellcolor[rgb]{0.75,0.91,0.72} 2.93e{-}3 & \cellcolor[rgb]{0.91,0.97,0.90} 0.087\\
RBM & CD & \cellcolor[rgb]{0.64,0.87,0.61} 1.04e{-}3 & 0.15 & \cellcolor[rgb]{0.96,0.68,0.68} 0.15 & \cellcolor[rgb]{0.84,0.94,0.82} 5.62e{-}3 & \cellcolor[rgb]{0.96,0.71,0.71} 0.025\\
\midrule
\multicolumn{7}{l}{\emph{Moment-trained}}\\
FBM (passive) & Z-corr & \cellcolor[rgb]{0.64,0.87,0.60} 1.03e{-}3 & 1.00 & \cellcolor[rgb]{0.62,0.86,0.58} 1.00 & \cellcolor[rgb]{0.86,0.95,0.84} 6.43e{-}3 & \cellcolor[rgb]{0.96,0.74,0.74} 0.029\\
magic FBM & MMD$^2$ & \cellcolor[rgb]{0.67,0.88,0.64} 1.09e{-}3 & 0.16 & \cellcolor[rgb]{0.96,0.69,0.69} 0.16 & \cellcolor[rgb]{0.94,0.55,0.55} 2.47e{-}2 & \cellcolor[rgb]{0.94,0.55,0.55} 0.001\\
GMMN (classical MMD) & MMD$^2$ & \cellcolor[rgb]{0.94,0.55,0.55} 1.73e{-}3 & 0.24 & \cellcolor[rgb]{0.97,0.76,0.76} 0.24 & \cellcolor[rgb]{0.71,0.89,0.68} 2.07e{-}3 & \cellcolor[rgb]{0.69,0.88,0.65} 0.129\\
IQP & MMD$^2$ & \cellcolor[rgb]{0.88,0.96,0.87} 1.43e{-}3 & 0.14 & \cellcolor[rgb]{0.96,0.68,0.68} 0.14 & \cellcolor[rgb]{0.95,0.98,0.94} 1.08e{-}2 & \cellcolor[rgb]{0.95,0.64,0.64} 0.014\\
FBM (active) & Z-corr & \cellcolor[rgb]{0.85,0.95,0.84} 1.39e{-}3 & 0.09 & \cellcolor[rgb]{0.95,0.63,0.63} 0.09 & \cellcolor[rgb]{0.98,0.83,0.83} 2.07e{-}2 & \cellcolor[rgb]{0.94,0.55,0.55} 0.000\\
\bottomrule
\end{tabular}

%% file: tables/master_n16.tex
\begin{table}\centering\scriptsize
\caption{\textbf{Cardinality benchmark at $N=16$, HW$=8$, $\epsilon=0.01$.} $Q=10,000$, raw sampling, mean over seeds. MMD$^2$ is the eval Gaussian--Hamming discrepancy to held-out test (computed post-training for every model).}
\begin{tabular}{l l r r r | r r r r}
\toprule
Model & training loss & params & MMD$^2_{\sigma=1}$ & MMD$^2_{\rm med}$ & $E$ & $F$ & $R$ & $C$\\
\midrule
TNBM & NLL (DMRG, cumulant) & 1132 & 3.26e{-}3 & 4.38e{-}4 & 0.960 & 0.258 & 0.247 & \textbf{0.116}\\
FBM (passive) & Z-string correlator MSE & 120 & 3.04e{-}3 & 5.38e{-}4 & 0.975 & 1.000 & 0.975 & \textbf{0.377}\\
FBM (active) & Z-string correlator MSE & 496 & 9.02e{-}3 & 3.06e{-}3 & 0.998 & 0.142 & 0.142 & \textbf{0.105}\\
magic FBM & MMD$^2$ (stochastic) & 784 & 3.15e{-}3 & 2.51e{-}4 & 0.998 & 0.210 & 0.210 & \textbf{0.143}\\
IQP & MMD$^2$ (den-Nest) & 136 & 2.11e{-}3 & 1.51e{-}4 & 0.998 & 0.212 & 0.211 & \textbf{0.143}\\
Transformer (S) & cross-entropy (NLL) & 3394 & 1.54e{-}3 & 2.80e{-}4 & 0.992 & 0.650 & 0.644 & \textbf{0.380}\\
Transformer (M) & cross-entropy (NLL) & 25634 & 2.11e{-}3 & 4.62e{-}4 & 0.989 & 0.893 & 0.883 & \textbf{0.461}\\
Transformer (L) & cross-entropy (NLL) & 150402 & 2.02e{-}3 & 4.13e{-}4 & 0.988 & 0.969 & 0.957 & \textbf{0.498}\\
RNN (S) & cross-entropy (NLL) & 1714 & 1.33e{-}3 & 1.76e{-}4 & 0.991 & 0.727 & 0.721 & \textbf{0.416}\\
RNN (M) & cross-entropy (NLL) & 25282 & 1.51e{-}3 & 2.33e{-}4 & 0.989 & 0.913 & 0.903 & \textbf{0.479}\\
RNN (L) & cross-entropy (NLL) & 198786 & 1.83e{-}3 & 2.94e{-}4 & 0.982 & 0.971 & 0.953 & \textbf{0.453}\\
RBM & contrastive divergence & 424 & 2.68e{-}3 & 4.46e{-}4 & 0.998 & 0.193 & 0.193 & \textbf{0.139}\\
\bottomrule
\end{tabular}\end{table}

\begin{table}\centering\scriptsize
\caption{\textbf{Cardinality benchmark at $N=16$, HW$=8$, $\epsilon=0.05$.} $Q=10,000$, raw sampling, mean over seeds. MMD$^2$ is the eval Gaussian--Hamming discrepancy to held-out test (computed post-training for every model).}
\begin{tabular}{l l r r r | r r r r}
\toprule
Model & training loss & params & MMD$^2_{\sigma=1}$ & MMD$^2_{\rm med}$ & $E$ & $F$ & $R$ & $C$\\
\midrule
TNBM & NLL (DMRG, cumulant) & 1132 & 1.35e{-}3 & 1.79e{-}4 & 0.937 & 0.901 & 0.844 & \textbf{0.470}\\
FBM (passive) & Z-string correlator MSE & 120 & 2.17e{-}3 & 2.20e{-}4 & 0.932 & 1.000 & 0.932 & \textbf{0.388}\\
FBM (active) & Z-string correlator MSE & 496 & 8.51e{-}3 & 2.87e{-}3 & 0.993 & 0.137 & 0.136 & \textbf{0.105}\\
magic FBM & MMD$^2$ (stochastic) & 784 & 3.08e{-}3 & 1.81e{-}4 & 0.990 & 0.205 & 0.202 & \textbf{0.143}\\
IQP & MMD$^2$ (den-Nest) & 136 & 2.06e{-}3 & 1.19e{-}4 & 0.990 & 0.201 & 0.199 & \textbf{0.141}\\
Transformer (S) & cross-entropy (NLL) & 3394 & 1.12e{-}3 & 1.50e{-}4 & 0.962 & 0.750 & 0.721 & \textbf{0.440}\\
Transformer (M) & cross-entropy (NLL) & 25634 & 1.92e{-}3 & 4.86e{-}4 & 0.959 & 0.815 & 0.781 & \textbf{0.455}\\
Transformer (L) & cross-entropy (NLL) & 150402 & 1.35e{-}3 & 2.34e{-}4 & 0.949 & 0.989 & 0.939 & \textbf{0.524}\\
RNN (S) & cross-entropy (NLL) & 1714 & 1.03e{-}3 & 9.06e{-}5 & 0.953 & 0.918 & 0.875 & \textbf{0.505}\\
RNN (M) & cross-entropy (NLL) & 25282 & 1.65e{-}3 & 3.37e{-}4 & 0.949 & 0.931 & 0.884 & \textbf{0.490}\\
RNN (L) & cross-entropy (NLL) & 198786 & 1.29e{-}3 & 1.93e{-}4 & 0.947 & 0.983 & 0.931 & \textbf{0.515}\\
RBM & contrastive divergence & 424 & 1.36e{-}1 & 2.56e{-}2 & 1.000 & 0.003 & 0.003 & \textbf{0.002}\\
\bottomrule
\end{tabular}\end{table}

\begin{table}\centering\scriptsize
\caption{\textbf{Cardinality benchmark at $N=16$, HW$=8$, $\epsilon=0.10$.} $Q=10,000$, raw sampling, mean over seeds. MMD$^2$ is the eval Gaussian--Hamming discrepancy to held-out test (computed post-training for every model).}
\begin{tabular}{l l r r r | r r r r}
\toprule
Model & training loss & params & MMD$^2_{\sigma=1}$ & MMD$^2_{\rm med}$ & $E$ & $F$ & $R$ & $C$\\
\midrule
TNBM & NLL (DMRG, cumulant) & 1088 & 1.12e{-}3 & 1.17e{-}4 & 0.890 & 0.971 & 0.864 & \textbf{0.515}\\
FBM (passive) & Z-string correlator MSE & 120 & 1.86e{-}3 & 1.22e{-}4 & 0.876 & 1.000 & 0.876 & \textbf{0.395}\\
FBM (active) & Z-string correlator MSE & 496 & 8.69e{-}3 & 2.95e{-}3 & 0.986 & 0.131 & 0.130 & \textbf{0.105}\\
magic FBM & MMD$^2$ (stochastic) & 784 & 3.01e{-}3 & 1.51e{-}4 & 0.980 & 0.199 & 0.195 & \textbf{0.145}\\
IQP & MMD$^2$ (den-Nest) & 136 & 1.96e{-}3 & 9.41e{-}5 & 0.979 & 0.190 & 0.186 & \textbf{0.139}\\
Transformer (S) & cross-entropy (NLL) & 3394 & 1.00e{-}3 & 7.41e{-}5 & 0.911 & 0.829 & 0.755 & \textbf{0.477}\\
Transformer (M) & cross-entropy (NLL) & 25634 & 1.31e{-}3 & 2.37e{-}4 & 0.903 & 0.971 & 0.877 & \textbf{0.523}\\
Transformer (L) & cross-entropy (NLL) & 150402 & 1.54e{-}3 & 3.41e{-}4 & 0.902 & 0.978 & 0.882 & \textbf{0.522}\\
RNN (S) & cross-entropy (NLL) & 1714 & 9.85e{-}4 & 7.39e{-}5 & 0.900 & 0.965 & 0.868 & \textbf{0.524}\\
RNN (M) & cross-entropy (NLL) & 25282 & 1.05e{-}3 & 9.52e{-}5 & 0.897 & 0.978 & 0.877 & \textbf{0.524}\\
RNN (L) & cross-entropy (NLL) & 198786 & 1.17e{-}3 & 1.69e{-}4 & 0.903 & 0.946 & 0.854 & \textbf{0.511}\\
RBM & contrastive divergence & 424 & 1.63e{-}2 & 4.94e{-}3 & 0.989 & 0.097 & 0.096 & \textbf{0.080}\\
\bottomrule
\end{tabular}\end{table}

%% file: tables/master_n20.tex
\begin{table}\centering\scriptsize
\caption{\textbf{Cardinality benchmark at $N=20$, HW$=10$, $\epsilon=0.01$.} $Q=100,000$, raw sampling, mean over seeds. MMD$^2$ is the eval Gaussian--Hamming discrepancy to held-out test (computed post-training for every model).}
\begin{tabular}{l l r r r | r r r r}
\toprule
Model & training loss & params & MMD$^2_{\sigma=1}$ & MMD$^2_{\rm med}$ & $E$ & $F$ & $R$ & $C$\\
\midrule
TNBM & NLL (DMRG, cumulant) & 1422 & 1.04e{-}3 & 8.97e{-}5 & 0.990 & 0.986 & 0.976 & \textbf{0.408}\\
FBM (passive) & Z-string correlator MSE & 190 & 1.27e{-}3 & 7.74e{-}5 & 0.988 & 1.000 & 0.988 & \textbf{0.305}\\
FBM (active) & Z-string correlator MSE & 780 & 4.24e{-}3 & 1.93e{-}3 & 0.999 & 0.122 & 0.122 & \textbf{0.064}\\
magic FBM & MMD$^2$ (stochastic) & 1230 & 2.47e{-}3 & 1.21e{-}4 & 0.998 & 0.214 & 0.213 & \textbf{0.096}\\
IQP & MMD$^2$ (den-Nest) & 210 & 1.45e{-}3 & 8.22e{-}5 & 0.998 & 0.192 & 0.192 & \textbf{0.095}\\
Transformer (S) & cross-entropy (NLL) & 3394 & 1.03e{-}3 & 7.00e{-}5 & 0.992 & 0.789 & 0.783 & \textbf{0.346}\\
Transformer (M) & cross-entropy (NLL) & 25634 & 1.10e{-}3 & 1.36e{-}4 & 0.990 & 0.989 & 0.979 & \textbf{0.411}\\
Transformer (L) & cross-entropy (NLL) & 150402 & 1.09e{-}3 & 1.27e{-}4 & 0.990 & 0.994 & 0.983 & \textbf{0.411}\\
RNN (S) & cross-entropy (NLL) & 1714 & 1.09e{-}3 & 1.30e{-}4 & 0.990 & 0.976 & 0.966 & \textbf{0.405}\\
RNN (M) & cross-entropy (NLL) & 25282 & 1.08e{-}3 & 1.02e{-}4 & 0.990 & 0.989 & 0.979 & \textbf{0.410}\\
RNN (L) & cross-entropy (NLL) & 198786 & 1.36e{-}3 & 2.38e{-}4 & 0.990 & 0.967 & 0.957 & \textbf{0.389}\\
RBM & contrastive divergence & 524 & 1.43e{-}3 & 8.19e{-}5 & 0.998 & 0.175 & 0.175 & \textbf{0.091}\\
\bottomrule
\end{tabular}\end{table}

\begin{table}\centering\scriptsize
\caption{\textbf{Cardinality benchmark at $N=20$, HW$=10$, $\epsilon=0.05$.} $Q=100,000$, raw sampling, mean over seeds. MMD$^2$ is the eval Gaussian--Hamming discrepancy to held-out test (computed post-training for every model).}
\begin{tabular}{l l r r r | r r r r}
\toprule
Model & training loss & params & MMD$^2_{\sigma=1}$ & MMD$^2_{\rm med}$ & $E$ & $F$ & $R$ & $C$\\
\midrule
TNBM & NLL (DMRG, cumulant) & 1244 & 9.90e{-}4 & 5.71e{-}5 & 0.949 & 0.993 & 0.943 & \textbf{0.414}\\
FBM (passive) & Z-string correlator MSE & 190 & 1.31e{-}3 & 6.09e{-}5 & 0.947 & 1.000 & 0.947 & \textbf{0.308}\\
FBM (active) & Z-string correlator MSE & 780 & 4.33e{-}3 & 1.98e{-}3 & 0.994 & 0.118 & 0.118 & \textbf{0.064}\\
magic FBM & MMD$^2$ (stochastic) & 1230 & 2.41e{-}3 & 1.03e{-}4 & 0.989 & 0.206 & 0.204 & \textbf{0.097}\\
IQP & MMD$^2$ (den-Nest) & 210 & 1.43e{-}3 & 7.15e{-}5 & 0.991 & 0.175 & 0.173 & \textbf{0.090}\\
Transformer (S) & cross-entropy (NLL) & 3394 & 1.01e{-}3 & 5.54e{-}5 & 0.960 & 0.803 & 0.770 & \textbf{0.354}\\
Transformer (M) & cross-entropy (NLL) & 25634 & 1.02e{-}3 & 6.93e{-}5 & 0.949 & 0.997 & 0.947 & \textbf{0.416}\\
Transformer (L) & cross-entropy (NLL) & 150402 & 1.16e{-}3 & 1.72e{-}4 & 0.950 & 0.998 & 0.948 & \textbf{0.413}\\
RNN (S) & cross-entropy (NLL) & 1714 & 1.06e{-}3 & 9.39e{-}5 & 0.949 & 0.998 & 0.948 & \textbf{0.416}\\
RNN (M) & cross-entropy (NLL) & 25282 & 1.09e{-}3 & 1.26e{-}4 & 0.953 & 0.952 & 0.907 & \textbf{0.400}\\
RNN (L) & cross-entropy (NLL) & 198786 & 1.11e{-}3 & 1.01e{-}4 & 0.953 & 0.929 & 0.885 & \textbf{0.389}\\
RBM & contrastive divergence & 524 & 1.44e{-}3 & 8.79e{-}5 & 0.991 & 0.168 & 0.167 & \textbf{0.090}\\
\bottomrule
\end{tabular}\end{table}

\begin{table}\centering\scriptsize
\caption{\textbf{Cardinality benchmark at $N=20$, HW$=10$, $\epsilon=0.10$.} $Q=100,000$, raw sampling, mean over seeds. MMD$^2$ is the eval Gaussian--Hamming discrepancy to held-out test (computed post-training for every model).}
\begin{tabular}{l l r r r | r r r r}
\toprule
Model & training loss & params & MMD$^2_{\sigma=1}$ & MMD$^2_{\rm med}$ & $E$ & $F$ & $R$ & $C$\\
\midrule
TNBM & NLL (DMRG, cumulant) & 1232 & 1.01e{-}3 & 5.35e{-}5 & 0.899 & 0.992 & 0.893 & \textbf{0.414}\\
FBM (passive) & Z-string correlator MSE & 190 & 1.34e{-}3 & 4.63e{-}5 & 0.895 & 1.000 & 0.895 & \textbf{0.308}\\
FBM (active) & Z-string correlator MSE & 780 & 4.35e{-}3 & 1.97e{-}3 & 0.988 & 0.113 & 0.111 & \textbf{0.064}\\
magic FBM & MMD$^2$ (stochastic) & 1230 & 2.46e{-}3 & 8.46e{-}5 & 0.978 & 0.198 & 0.194 & \textbf{0.096}\\
IQP & MMD$^2$ (den-Nest) & 210 & 1.44e{-}3 & 7.77e{-}5 & 0.982 & 0.165 & 0.162 & \textbf{0.089}\\
Transformer (S) & cross-entropy (NLL) & 3394 & 1.03e{-}3 & 6.13e{-}5 & 0.915 & 0.829 & 0.759 & \textbf{0.365}\\
Transformer (M) & cross-entropy (NLL) & 25634 & 1.02e{-}3 & 5.79e{-}5 & 0.900 & 0.999 & 0.899 & \textbf{0.416}\\
Transformer (L) & cross-entropy (NLL) & 150402 & 9.82e{-}4 & 4.25e{-}5 & 0.900 & 0.999 & 0.900 & \textbf{0.416}\\
RNN (S) & cross-entropy (NLL) & 1714 & 1.00e{-}3 & 5.59e{-}5 & 0.901 & 0.995 & 0.896 & \textbf{0.416}\\
RNN (M) & cross-entropy (NLL) & 25282 & 1.07e{-}3 & 1.10e{-}4 & 0.905 & 0.945 & 0.855 & \textbf{0.398}\\
RNN (L) & cross-entropy (NLL) & 198786 & 1.12e{-}3 & 1.41e{-}4 & 0.908 & 0.911 & 0.827 & \textbf{0.387}\\
RBM & contrastive divergence & 524 & 1.41e{-}3 & 7.52e{-}5 & 0.982 & 0.163 & 0.160 & \textbf{0.092}\\
\bottomrule
\end{tabular}\end{table}